\definecolor{darkred}{RGB}{90, 0, 0}
\definecolor{darkblue}{RGB}{0,0,130}
\renewcommand{\footnoterule}{%
  \kern -3pt
  \hrule width 1in
  \kern 2pt
}
\titlespacing*{\section}{0pt}{*3}{*2}
\titlespacing{\subsection}{0pt}{*2}{*1}
\titlespacing{\subsubsection}{0pt}{*2}{*2}
\newtheorem{assumption}{\em Assumption}
\newtheorem{theorem}{Theorem}
\newtheorem{lemma}{Lemma}
\newcommand{\descr}[1]{\smallskip\noindent\textbf{#1}}
\newcommand{\distin}{{\tt DistinSMOTE}\xspace}
\newcommand{\recon}{{\tt ReconSMOTE}\xspace}
\newcommand{\distingbold}{\texttt{\textbf{DistinSMOTE}}\xspace}
\newcommand{\reconbold}{\texttt{\textbf{ReconSMOTE}}\xspace}
\newif\ifcomment
	\newcommand{\edc}[1]{\textbf{\em\color{red}EDC: #1}}
	\newcommand{\gvg}[1]{\textbf{\em\color{blue}GG: #1}}
	\newcommand\edc[1]{}
	\newcommand\gvg[1]{}
\begin{document}

\title{\bf SMOTE and Mirrors: Exposing Privacy Leakage from Synthetic Minority Oversampling\thanks{Published at the 14th International Conference on Learning Representations (ICLR 2026). Please cite the ICLR version.}}

\author{Georgi Ganev$^{1,2}$, Reza Nazari$^1$, Rees Davison$^1$, Amir Dizche$^1$, Xinmin Wu$^1$,\\ Ralph Abbey$^1$, Jorge Silva$^1$, Emiliano De Cristofaro$^3$\\[1ex]
\normalsize $^1$SAS\;\; $^2$UCL\;\; $^3$UC Riverside\\
\normalsize georgi.ganev@sas.com}

\date{}

\maketitle

\vspace*{-0.2cm}
\begin{abstract}
The Synthetic Minority Over-sampling Technique (SMOTE) is one of the most widely used methods for addressing class imbalance and generating synthetic data.
Despite its popularity, little attention has been paid to its privacy implications; yet, it is used in the wild in many privacy-sensitive applications. %
In this work, we conduct the first systematic study of privacy leakage in SMOTE:
we begin by showing that prevailing evaluation practices, i.e., naive distinguishing and distance-to-closest-record metrics, completely fail to detect any leakage and that membership inference attacks (MIAs) can be instantiated with high accuracy.
Then, by exploiting SMOTE's geometric properties, we build two novel attacks with very limited assumptions: \distin{}, which perfectly distinguishes real from synthetic records in augmented datasets, and \recon{}, which reconstructs real minority records from synthetic datasets with perfect precision and recall approaching one under realistic imbalance ratios.
We also provide theoretical guarantees for both attacks.
Experiments on eight standard imbalanced datasets confirm the practicality and effectiveness of these attacks.
Overall, our work reveals that SMOTE is inherently non-private and disproportionately exposes minority records, highlighting the need to reconsider its use in privacy-sensitive applications and as a baseline for assessing the privacy of modern generative models.
\end{abstract}

\section{Introduction}
\label{sec:intro}
From rare disease diagnosis to fraud detection, machine learning tasks can be profoundly affected by severe class imbalance, where instances of interest -- the minority class -- are much rarer than the majority class~\citep{he2009learning}.
Models often underperform under these conditions, exhibiting biases toward the majority and failing to capture the minority reliably~\citep{chen2024survey}.
One of the most influential and widely adopted approaches to address this is the Synthetic Minority Over-sampling Technique (SMOTE)~\citep{chawla2002smote}, which augments the imbalanced data by upsampling or generating synthetic samples of the underrepresented class through linear interpolation between minority records.
Due to its simplicity and effectiveness, SMOTE continues to play a central role in real-world applications.
To put things in context, the SMOTE paper has been cited nearly 40k times, Microsoft Azure offers built-in SMOTE components~\citep{azure, azure2}, and most MLaaS services support it~\citep{vertex, aws}.
Overall, SMOTE is primarily used in two contexts: 1) as a data augmentation technique for machine learning classifiers, and 2) as a synthetic data generation method to facilitate data sharing.

\descr{Data Augmentation.}
SMOTE was originally proposed as a pre-processing/upsampling technique to augment the real dataset, thus improving classifier performance, especially F1 score and recall, when trained on the augmented data.
Practitioners rely on SMOTE in a wide range of medical applications, including cancer diagnosis~\citep{fotouhi2019comprehensive}, heart-related diseases~\citep{muntasir2022comprehensive, el2024proposed}, diabetes prediction~\citep{ramezankhani2016impact, alghamdi2017predicting}, genetic risk prediction~\citep{kosolwattana2023self}, etc.
Beyond medicine, SMOTE is widely applied in finance, particularly for credit-card fraud detection~\citep{zhao2022financial, khalid2024enhancing} and predicting customer churn~\citep{peng2023research, ouf2024proposed}.

\descr{Synthetic Data.}
SMOTE has also gained traction as a method for generating synthetic tabular data.
Often used as a baseline for more advanced models like GANs and VAE, SMOTE has been shown to perform on par with, or even better than, generative approaches~\citep{manousakas2023usefulness, kindji2024under}.
Moreover, its extensive use as a baseline has led to modern diffusion-based models (not explicitly designed with privacy in mind) to be characterized as privacy-preserving simply because they outperform SMOTE, a pattern repeatedly observed in top-tier machine learning publications~\citep{kotelnikov2023tabddpm, zhang2024mixed, pang2024clavaddpm, mueller2025continuous}.
SMOTE is also applied for medical synthetic data~\citep{kaabachi2025scoping}, and beyond machine learning research, it has been recognized as a promising technique for improving access to census data by public sector entities%
~\citep{ons2019synthesising, adr2025scoping}.

\begin{table}[t]
  \centering
  \begin{minipage}[H]{0.49\textwidth}
    \centering
    \includegraphics[width=\linewidth]{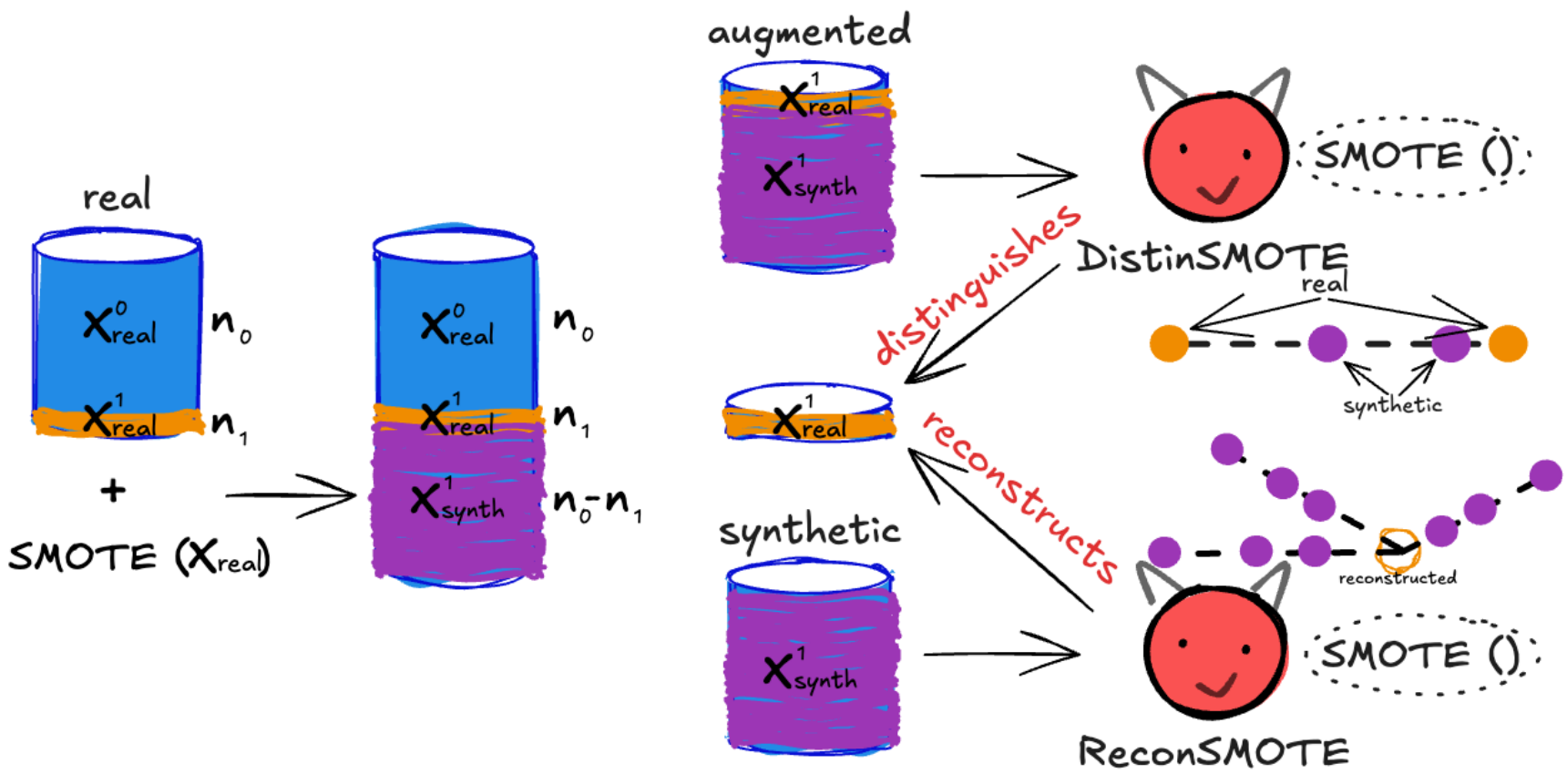}
    \captionof{figure}{\distin{} and \recon{} attacks vs. augmented/synthetic data generated by SMOTE.}
    \label{fig:attack}
  \end{minipage}
  \hfill
  \begin{minipage}[H]{0.49\textwidth}
    \small
    \centering
  	\begin{tabular}{rrr}
  		\toprule
	    \multicolumn{3}{c}{\bf Augmented Data}                        \\
      \midrule
	    \textbf{Naive}      & \textbf{MIA}        & \distin	          \\
  		\midrule
		  0.01 $\pm$ 0.01     & 0.68 $\pm$ 0.07	    & 1.00 $\pm$ 0.00   \\
      \midrule
      \multicolumn{3}{c}{\bf Synthetic Data}                        \\
      \midrule
      \textbf{Naive}		  & \textbf{MIA}        & \recon{}	        \\
      \midrule
      0.16 $\pm$ 0.10     & 0.93 $\pm$ 0.02		  & 1.00 $\pm$ 0.00   \\
  	  \bottomrule
  	\end{tabular}
    \vspace{-7pt}
  	\caption{Performance of privacy attacks vs. SMOTE. Naive refers to the current privacy evaluation practices. MIAs are applied to SMOTE for the first time. \distin{} and \recon{} are our novel attacks.}
  	\label{tab:distin}
  \end{minipage}
	\vspace{-10pt}
\end{table}

\descr{Roadmap.}
Although not originally designed with privacy in mind, SMOTE is extensively used in sensitive public-facing applications that process personal data.
However, its privacy risks have often been overlooked or significantly underestimated.
In this paper, we fill this gap by studying whether, why, and how much privacy leakage occurs from using SMOTE either as a data augmentation method or as a standalone synthetic data generator.

We begin by showing that SMOTE appears to have no privacy leakage when evaluating it through the current practice of training a classifier to distinguish real from synthetic records in augmentation settings or that of measuring the distance from synthetic to real records (more precisely, the distance to closest record, or DCR~\citep{zhao2021ctab}).
We refer to the former as naive distinguishing and the latter as naive metrics.
We also instantiate---to our knowledge, for the first time---a Membership Inference Attack~(MIA)~\citep{shokri2017membership, stadler2022synthetic} against SMOTE, showing that attackers can accurately infer whether a target record was part of the real training data.

Next, we propose two {\em novel} near-perfect privacy attacks with minimal and realistic assumptions: a Distinguishing (\distin) and a Reconstruction attack (\recon).
Both only assume access to a single augmented or synthetic dataset and knowledge that SMOTE generated it (see Figure~\ref{fig:attack}).
By exploiting SMOTE's geometric properties,
\distin{} distinguishes real from synthetic records in augmentation settings, while the more ambitious \recon{} reconstructs real minority records from synthetic data.
We also provide a theoretical analysis for both attacks, showing they run at worst in $\mathcal{O}(n^2d + n(kr)^2)$, where $n$, $d$, and $k$ denote, respectively, the number of input records, features, and SMOTE neighbors, and $r$ represents the data imbalance ratio.
While quadratic in $n$, the complexity remains practical (especially with optimized search), with both attacks running within minutes on all %
datasets we experiment with.

\distin{} achieves perfect precision and recall, while \recon{} reaches perfect precision -- which is more critical in privacy attacks~\citep{carlini2022membership} -- with recall increasing exponentially (with rate $\approx {r}/{k}$), reaching 1 under realistic parameter values ($k = 5$, $r \ge 20$).

Our experiments, summarized in Table~\ref{tab:distin}, on eight standard imbalanced datasets demonstrate that:
\begin{itemize}[leftmargin=2em,itemsep=2pt,parsep=0pt,topsep=0pt]
  \item Naive distinguish (0.01 precision)/metrics (0.16 accuracy) completely underestimate risks.
  \item State-of-the-art MIAs achieve 0.68 AUC on augmented and 0.93 on synthetic data for 100 vulnerable targets, although being time-consuming.
  Also, sensitivity of targets increases when classifiers are trained on augmented vs.~real data, yielding a 17\% rise in MIA AUC.
  \item \distin{} perfectly detects the real records in an augmented dataset.
  \item \recon{} achieves perfect precision when reconstructing real minority records from a single synthetic dataset.
  While its average recall is 0.85, it reaches 1 for imbalance ratios of 20 or higher, consistent with our theoretical analysis.
\end{itemize}

\descr{Implications.}
Our findings provide further evidence that privacy cannot be treated as an afterthought when applying non-private techniques like SMOTE in sensitive settings.
Its use not only risks exposing individual records but can also undermine trust in data-driven systems that rely on synthetic data.
Overall, our work has the following real-world implications for researchers/practitioners:
\begin{enumerate}[leftmargin=2em,itemsep=2pt,parsep=0pt,topsep=0pt]
	\item SMOTE is fundamentally non-private: its interpolation process makes privacy leakage inherent, not a matter of flawed implementation.
  \item Minority records are disproportionately at risk: the very samples SMOTE aims to amplify and make more representative are also the most exposed.
	\item SMOTE and DCR are unreliable: evaluating SMOTE with privacy metrics like DCR gives a misleading assessment of its privacy and should not be used to validate other generative models.
  \item Caution is critical: performance gains from oversampling must be weighed vs. privacy risks.
\end{enumerate}

\section{Preliminaries}
\label{sec:prelim}

\descr{Notation.}
Let $D_{real} = (X, y)$ be a training dataset, where $X \subseteq \mathbb{R}^{n \times d}$ is the feature matrix consisting of $n$ samples with $d$-dimensional feature vectors and $y \in \{0,1\}^n$ the corresponding binary labels.
The dataset consists of $n_0$ majority and $n_1$ minority records, with imbalance ratio $r = \tfrac{n_0}{n_1} > 1$.
In practice, $n_0 \gg n_1$, which makes learning directly from the minority class challenging.

\begin{wrapfigure}[13]{r}{0.4\textwidth}
  \begin{minipage}{0.4\textwidth}
    \begin{algorithm}[H]  %
      \footnotesize
      \algrenewcommand\algorithmicindent{1em}%
      \algrenewcommand\alglinenumber[1]{\scriptsize #1}
      \caption{SMOTE~\citep{chawla2002smote}}
      \label{alg:smote}
      \begin{algorithmic}[1]
        \Require Real dataset $D_{real}$
        \Require Number of neighbors $k$
        \Ensure Augmented data $D_{aug} = D_{real} \cup D_{syn}$,
        \Statex ~~or Synthetic data $D_{syn}$

        \State Filter minority $X^{1}_{real} {\gets} \{X_{real}[i] {\mid} y_{real}[i] {=} 1 \}$
        \State Compute $n_1 = |X^{1}_{real}|$, $n_0 = |D_{real}| - n_1$

        \While{$|D_{syn}| < n_0 - n_1$}
          \State Randomly pick $x_i \in X^{1}_{real}$
          \State Find $k$ nearest neighbors of $x_i$, $N(x_i)$
          \State Randomly choose $x_j \in N(x_i)$
          \State Sample $u \sim U(0,1)$
          \State Generate $x_{syn} \gets x_i + u (x_j - x_i)$
          \State Add $(x_{syn}, 1)$ to $D_{syn}$
        \EndWhile
      \end{algorithmic}
    \end{algorithm}
  \end{minipage}
\end{wrapfigure}

\smallskip\noindent\textbf{SMOTE}~\citep{chawla2002smote} addresses class imbalance by generating synthetic minority samples; see Algorithm~\ref{alg:smote}.
To create a synthetic record, a random minority record from $D_{real}$ is selected, one of its $k$ nearest minority neighbors is chosen, and a new point is generated by interpolating along the line segment between them.
Repeating this process yields $D_{syn}$ with $n_0{-}n_1$ new samples, balancing the class distribution.
The synthetic data can be used as a standalone synthetic dataset ($D_{syn}$) or to form an augmented dataset $D_{aug} = D_{real} \cup D_{syn}$, e.g., to improve classification performance.

\descr{Privacy Attacks.}
Membership Inference Attacks~(MIAs)~\citep{shokri2017membership, stadler2022synthetic} %
and Reconstruction Attacks~\citep{dinur2003revealing, annamalai2024linear} are standard tools to empirically measure privacy leakage in ML.
In MIAs, the adversary aims to infer whether a target record $(x_T, y_T)$ was part of the training dataset $D_{real}$.
The attack can be framed as a repeated binary classification game: the adversary is given either a classifier (or a synthetic dataset) trained on $D_{real}$, or one trained on the neighboring $D_{real}' = D_{real} \setminus {(x_T, y_T)}$, and infers which dataset was used.
To do so, the adversary typically exploits differences in model behavior -- such as prediction confidences on the target, or statistical features extracted from synthetic data.

In a reconstruction attack, the adversary aims to recover any full real records (i.e., an untargeted attack) from access to a released model or synthetic data.
These attacks often assume access to auxiliary information, such as public data, accurate statistics, or limited query access to $D_{real}$.

We also consider distinguishing attacks, which are somewhat related to MIAs but focus on whether a record comes from the population-level data distribution rather than from the specific dataset used to train the model.
In the context of SMOTE, we use these attacks to distinguish unlabelled records in $D_{aug}$ as either real ($D_{real}$) or synthetic ($D_{syn}$), since $D_{real} \cap D_{syn} = \emptyset$.

\section{Related Work}
As discussed in Section~\ref{sec:intro}, SMOTE is widely used for data augmentation and synthetic data generation across various domains.
Despite its popularity, prior work has focused primarily on its utility, while its privacy risks remain largely unexplored.

SMOTE has recently served as a baseline in several diffusion-based generative models~\citep{kotelnikov2023tabddpm, zhang2024mixed, pang2024clavaddpm, mueller2025continuous}, all published at top-tier machine learning venues.
A common pattern across these studies is the ``SMOTE + DCR'' workflow: they rely on the Distance to Closest Record~(DCR)~\citep{zhao2021ctab} as the primary privacy proxy, consistently reporting smaller DCR values for SMOTE and interpreting this as evidence that the newly proposed models are privacy-preserving.
\citet{kotelnikov2023tabddpm} additionally employ the ``full black-box'' attack from~\citep{chen2020gan}, which still reduces to DCR as its core signal.

More recently, this view has been challenged by~\citet{sidorenko2025privacy}, who show that some diffusion models~\citep{kotelnikov2023tabddpm, mueller2025continuous} actually achieve lower DCR values than SMOTE, thereby leaking more information about the training data.
However, DCR itself has been shown to be an inadequate privacy metric -- it consistently underestimates leakage~\citep{houssiau2022tapas, annamalai2024what, ganev2025inadequacy} and does not correlate with leakage detected by MIAs~\citep{yao2025dcr}.
To the best of our knowledge, despite its prominent role as a baseline, SMOTE has not yet been systematically evaluated with state-of-the-art MIAs or any model-specific attacks.
This leaves a critical gap in understanding SMOTE's true privacy risks and calls into question the validity of privacy claims across a recent line of generative-model research.

\section{Privacy Attacks vs. SMOTE}
\label{sec:att}
In this section, we present our two novel privacy attacks that expose privacy leakage from SMOTE.

\subsection{Adversarial Model}
\descr{Assumptions.}
For both attacks, we assume an adversary with access to a {\em single} dataset generated by SMOTE ($D_{aug}$ for \distin{} and $D_{syn}$ for \recon{}).
The adversary knows that the original SMOTE algorithm~\citep{chawla2002smote} was applied (Algorithm~\ref{alg:smote}) and is aware of its parameters, specifically, the number of neighbors $k$ and the real data imbalance ratio $r$ (in practice, these can be approximated from the released dataset).
The adversary relies solely on the geometrical properties of SMOTE to achieve its goals -- either distinguishing or reconstruction.
Unlike prior privacy attacks, no further knowledge is required: %
e.g., the adversary does not need access to public/representative data, repeated inference or generation, the model parameters, numerous shadow models or a meta-classifier (as in MIAs~\citep{stadler2022synthetic, houssiau2022tapas, annamalai2024what}), or published (accurate) aggregate statistics (as in reconstruction~\citep{dinur2003revealing, dick2023confidence}).

\descr{Objectives.}
For \distin{}, the adversary aims to distinguish the real minority records from synthetic ones from observing $D_{aug}$, whereas for \recon{}, to reconstruct them from $D_{syn}$.
We focus on minority records from underrepresented regions of the feature space because they often correspond to the most vulnerable individuals.
Such records carry the greatest privacy risks: they are easier to single out, more likely to be re-identified, and any disclosure disproportionately affects the individuals they represent~\citep{kulynych2022disparate, stadler2022synthetic}.
Indeed, regulators, including the UK Information Commissioner's Office~\citep{ico2022privacy}, have explicitly stressed the need to protect minority and outlier records.

We measure the attacks success using precision (the fraction of identified records that are truly real) and recall (the fraction of successfully identified real records), two standard metrics that together give a comprehensive view of performance.
In privacy attacks, precision is especially critical, since even a handful of correctly identified records with high confidence can constitute a serious breach~\citep{carlini2022membership}.
While secondary, capturing a large fraction of minority records is also important.

\descr{Data Assumptions.}
Our theoretical analysis of \distin{} and \recon{} relies on the following three realistic assumptions about the feature structure of the real minority records ($X^{1}_{real}$) and $k$:

\begin{assumption}[Real-valued features]
  \label{ass:real}
  All features in $X^{1}_{real}$ are real-valued.
\end{assumption}

\begin{assumption}[Global non-collinearity]
  \label{ass:noncon}
  No three distinct feature vectors in $X^{1}_{real}$ are collinear.
\end{assumption}

\begin{assumption}[Minimum $k$]
  \label{ass:k}
  The number of neighbors, $k\ge3$.
\end{assumption}

In other words, all features are continuous and no three points lie on the same line.
These assumptions align naturally with (high-dimensional) continuous data, are non-restrictive in practice, and, crucially, are satisfied by all datasets in our main experiments (see Section~\ref{sec:exp}).
Moreover, assuming $k \ge 3$ is necessary to uniquely identify the intersection point of the lines formed by vectors connecting neighboring points, and is a standard, practical choice in typical SMOTE configurations (default is 5).

\subsection{Distinguishing Attack}
In Algorithm~\ref{alg:distin}, we outline the \distin{} attack, which distinguishes {\em real} minority records from synthetic ones within an augmented dataset. %
The attack exploits the fact that, among any three collinear points, the middle one must be synthetic, since real points are non-collinear and SMOTE generates points strictly between them.
The algorithm begins by searching from the convex hull of the minority records and iteratively explores neighbors inwards.
When a collinear triplet is found, its midpoint is pruned from the candidate set with real points, and its neighbors are added to the queue for further inspection.

\begin{wrapfigure}[20]{r}{0.45\textwidth}
  \vspace{5pt}
  \begin{minipage}{0.45\textwidth}
    \begin{algorithm}[H]
      \footnotesize
      \algrenewcommand\algorithmicindent{0.8em}%
      \algrenewcommand\alglinenumber[1]{\scriptsize #1}
      \caption{\distin{}}
      \label{alg:distin}
      \begin{algorithmic}[1]
        \Require Augmented data $D_{aug}$
        \Require Number of neighbors $k$, imbalance ratio $r$
        \Ensure Detected real minority records $C^{1}$

        \State Filter minority $X^{1}_{aug} {\gets} \{X_{aug}[i] {\mid} y_{aug}[i] {=} 1 \}$
        \State Initialize candidate set $C^{1} \gets X^{1}_{aug}$
        \State Initialize queue $queue \gets H(X^{1}_{aug})$  \Comment{convex hull}
        \State Initialize visited set $V \gets \emptyset$

        \While{$queue \neq \emptyset$}
          \For{record $x_i \in queue$}
            \If{$x_i \notin V$ and $x_i \in C^{1}$}  %
              \State Add $x_i$ to $V$
              \State Find $2 \cdot k \cdot r$ nearest neighbors of $x_i$, $N(x_i)$
              \For{pairs of neighbors $(x_j,x_k) \in N(x_i)$}
                \If{$x_i,x_j,x_k$ are collinear}
                  \State Identify middle point $x_m \in \{x_i,x_j,x_k\}$
                  \State Remove $x_m$ from $C^{1}$; Add $x_m$ to $V$
                  \State Add $N(x_m) \cap C^{1}$ to $queue$
                \EndIf
              \EndFor
            \EndIf
          \EndFor
        \EndWhile
        \State \Return $C^{1}$
      \end{algorithmic}
    \end{algorithm}
  \end{minipage}
\end{wrapfigure}

\descr{Complexity.}
The nearest-neighbor search is the main cost.
With brute-force search ($\mathcal{O}(nd)$ per query), each of the $n$ records requires finding $kr$ neighbors ($\mathcal{O}(nd)$) and checking all neighbor pairs ($\mathcal{O}((kr)^2)$), yielding a worst-case complexity of $\mathcal{O}(n^2 d + n(kr)^2)$.
In practice, optimized methods (e.g., KD/ball trees) reduce search to $\mathcal{O}(\log n)$ for small $d$ ($d \le 32$ in our main datasets).
Also, since $k$ and $r$ are typically (small) constants, the effective complexity is much lower, and the algorithm completes in under three minutes on all main datasets.

\descr{Accuracy Analysis.}
We analyze \distin{}, with the theorem below formalizing the theoretical correctness of its labeling rule, achieving perfect precision and recall.

\begin{theorem}[\distin{} perfect precision \& recall]
	\vspace{-0.05cm}
  \label{theo:1}
  Under Assumptions~\ref{ass:real}--\ref{ass:noncon}, the labeling rule in the \distin{} attack achieves perfect precision and recall (with probability 1).
\end{theorem}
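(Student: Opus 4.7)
I would prove Theorem~\ref{theo:1} by decoupling the claim into (i) an almost-sure correctness statement for the collinearity-based labeling rule and (ii) a completeness argument for the breadth-first traversal started from the convex hull. The overall strategy is to show that the rule ``mark $x$ synthetic iff $x$ is the middle of some collinear triplet of minority points'' is both sound (no false positives, giving perfect recall on the real set) and that the BFS from $H(X^{1}_{aug})$ evaluates this rule on every synthetic point (giving perfect precision).

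\textbf{Perfect recall (no real record removed).} A record is dropped from $C^{1}$ only if it appears as the middle of a collinear triplet of minority points in $D_{aug}$. Fix any real $x_m \in X^{1}_{real}$ and suppose two other minority records $y,z \in X^{1}_{aug}$ witness $x_m$ as a middle, i.e., $x_m = (1-t)y + t z$ for some $t \in (0,1)$. If both $y,z$ are real, this directly contradicts Assumption~\ref{ass:noncon}. If at least one, say $y=(1-u_1)x_a+u_1 x_b$, is synthetic with real parents $x_a,x_b$, substitution yields a system of $d$ affine equations in the SMOTE parameters ($u_1$, and $u_2$ if $z$ is also synthetic) and $t$. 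By Assumption~\ref{ass:real} the coordinates are real-valued, and because $u$ is drawn from $U(0,1)$ in Algorithm~\ref{alg:smote}, the solution locus is a measure-zero affine subvariety of the sampling space. Hence, with probability one over SMOTE's randomness, no such witness exists and $x_m$ is never removed.

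\textbf{Perfect precision (every synthetic removed).} First, every synthetic $x_{syn}=x_i+u(x_j-x_i)$ with $u\in(0,1)$ is automatically the middle of the witness triplet $(x_i,x_{syn},x_j)$ whose endpoints are its two real parents. Second, I need to show that the BFS actually inspects this triplet. The queue is initialized with $H(X^{1}_{aug})$, which consists solely of real records, since any synthetic is a strict convex combination of two other minority points and therefore is not extreme. For each processed real $x_i$, SMOTE generates on average $\mathcal{O}(r)$ synthetics lying on segments from $x_i$ toward one of its $k$ real nearest neighbors; these are closer to $x_i$ than the far endpoint of their segment, so they fall inside the $2kr$-neighbor query. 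The corresponding collinear triplet is detected and the synthetic is removed. Finally, enqueuing the neighbors of each removed synthetic propagates the search through the nearest-neighbor graph of $X^{1}_{real}$, guaranteeing that every real point is eventually processed and every synthetic inspected.

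\textbf{Main obstacle.} The cleanest steps are those leveraging Assumption~\ref{ass:noncon} directly (ruling out the three-real case) and the trivial forward direction from SMOTE's construction. The real difficulty is the genericity/measure-zero argument handling the mixed and all-synthetic cases: one must show, in \emph{any} feature dimension (including $d=2$, where dimension counting alone does not preclude collinearity), that the joint distribution of SMOTE's random interpolations almost surely avoids accidental alignments placing a real record in the middle. A secondary technical point is verifying that the BFS, seeded only at the convex hull, propagates to \emph{every} real point so no synthetic is left unexamined; this reduces to a connectivity argument on the nearest-neighbor graph of the minority cluster, which should follow from the fact that the $2kr$ radius used at each step is large enough to reach at least one unvisited real neighbor.
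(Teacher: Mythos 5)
Your core argument is the same as the paper's: synthetic points are strict interior points of segments joining two real parents, and under Assumption~\ref{ass:noncon} a real record can never be the midpoint of a collinear triplet, so the rule ``middle of a collinear triplet $\Rightarrow$ synthetic, otherwise real'' labels every record correctly. Where you differ is in making explicit two issues the paper either glosses over or scopes out. First, you explicitly handle accidental collinearities in which the witnesses of a real midpoint are themselves synthetic. The paper simply asserts that any line in $D_{aug}$ containing three or more points is a parent segment with two real endpoints; strictly speaking that is an almost-sure statement over SMOTE's randomness, and your measure-zero argument is the honest way to justify it. Your worry about $d=2$ also dissolves: conditioning on everything except the interpolation coefficient of one synthetic witness, the line through the candidate real midpoint and the other witness meets that witness's parent segment in at most one point --- the segment can lie entirely on this line only if its two real endpoints are collinear with the real midpoint, which Assumption~\ref{ass:noncon} forbids --- so the uniform $u$ hits the required value with probability zero, and a finite union over triplets finishes the claim in every dimension.

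Second, you take on the extra burden of proving that the queue-based traversal of Algorithm~\ref{alg:distin} actually evaluates the rule at every synthetic point. The theorem, and the paper's proof, concern only the labeling rule, so this is not needed; more importantly, the two steps you lean on do not hold as stated. Being closer to the parent $x_i$ than the far endpoint of its segment does not imply membership in the $2kr$ nearest neighbors of $x_i$ within $D_{aug}$ (the number of synthetic points on segments incident to $x_i$ is random and can exceed that window), and the minority nearest-neighbor graph need not be connected, so seeding the queue only at the convex hull does not by itself guarantee that every real point is eventually processed. If you want algorithm-level (rather than rule-level) correctness you would need extra assumptions or a different traversal; if you restrict your precision paragraph to its first sentence --- every synthetic point is the strict midpoint of the triplet formed with its two real parents, hence labeled synthetic by the rule --- your proof of the theorem as stated is complete and matches the paper's, with a more careful treatment of the genericity issue.
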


\begin{proof}
By SMOTE construction, each synthetic point lies strictly on the line segment between two real points ($x_{syn} = x_i + u (x_j - x_i)$).
Under the global non-collinearity assumption, no three real points are collinear, so any line in $D_{aug}$ containing at least three points consists of exactly two real endpoints $x_i,x_j \in X^{1}_{real}$ and one or more synthetic interior points $x_{m_1}, x_{m_2}, \dots \in X^{1}_{syn}$.

\distin{} follows this labeling rule: it finds lines via local search (lines~9-11 of Algorithm~\ref{alg:distin}) and marks interior points as synthetic and endpoints as real (lines~12-13), while any real point not on such a line (i.e., not used in interpolation) is also labeled real.
The local search guarantees all points are visited efficiently, and the labeling rule ensures all synthetic points are removed and all real points preserved.
This leads to perfect precision and recall (with probability 1, except for negligible numerical precision effects that do not occur in practice).  %
\end{proof}

\subsection{Reconstruction Attack}
\label{subsec:recon}
Algorithm~\ref{alg:recon} presents the \recon{} attack, which operates solely on synthetic data.
The attack relies on two main intuitions: 1) synthetic records lie along line segments connecting real minority points, so these lines can be detected by finding three or more collinear samples, and 2) such lines intersect exactly at the original real points.
The algorithm begins by iteratively defining lines, searching each point and two of its neighbors for collinear triplets, and then extending them with additional collinear neighbors.
For each line, the mean of its points is stored as a midpoint, providing a compact representation of the line's location.
Next, the algorithm examines pairs of midpoints to identify intersection points of the corresponding lines, which serve as candidate real records.
Finally, we retain only intersections supported by at least three distinct lines, filtering out spurious candidates.

\begin{wrapfigure}[28]{r}{0.45\textwidth}
  \vspace{-7pt}
  \begin{minipage}{0.45\textwidth}
    \begin{algorithm}[H]
      \footnotesize
      \algrenewcommand\algorithmicindent{0.5em}%
      \algrenewcommand\alglinenumber[1]{\scriptsize #1}
    	\caption{\recon{}}
    	\label{alg:recon}
    	\begin{algorithmic}[1]
        \Require Synthetic data $D_{syn}$
        \Require Number of neighbors $k$, imbalance ratio $r$
        \Ensure Reconstructed real minority records $R^{1}$

        \State Filter minority $X^{1}_{syn} {\gets} \{X_{syn}[i] {\mid} y_{syn}[i] {=} 1 \}$
        \State Initialize reconstructed set $R^{1} \gets \emptyset$ and
        \Statex ~~line support map $S \gets \emptyset$
        \State Initialize set of lines $\mathcal{L} \gets \emptyset$, midpoints $\mathcal{M} \gets \emptyset$
        \State Initialize visited set $V \gets \emptyset$

        \For{record $x_i \in X^{1}_{syn}$}
          \If{$x_i \notin V$}
            \State Add $x_i$ to $V$
            \State Find $2 \cdot k \cdot r$ nearest neighbors of $x_i$, $N(x_i)$
            \For{pairs of neighbors $(x_j,x_k) \in N(x_i)$}
              \If{$x_i,x_j,x_k$ are collinear}
                \State Form initial line $(x_i,x_j,x_k)$; Add $x_j,x_k$ to $V$
                \For{neighbor $x_n \in N(x_i) \setminus \{x_i,x_j,x_k\}$}
                  \If{$x_n$ collinear with line $(x_i,x_j,x_k)$}
                    \State Add $x_n$ to line $(x_i,x_j,x_k)$; Add $x_n$ to $V$
                  \EndIf
                \EndFor
                \State Add line to $\mathcal{L}$
                \State Compute mean of line points and add to $\mathcal{M}$
              \EndIf
            \EndFor
          \EndIf
        \EndFor

        \For{pairs of midpoints $(m_{p}, m_{q}) \in \mathcal{M}$}
            \State Compute intersection point $x^{*}$ of lines
            \Statex ~~~~corresponding to $m_p$ and $m_q$
            \State Add $x^{*}$ to $R^{1}$ and record support line
            \Statex ~~~~indices $\{p,q\}$ in $S(x^*)$
        \EndFor
        \State Filter points in $R^{1}$ with $\lvert S(x^*) \rvert \ge 3$

        \State \Return $R^{1}$
      \end{algorithmic}
    \end{algorithm}
  \end{minipage}
\end{wrapfigure}

\descr{Complexity.}
The worst-case time complexity is very similar to \distin{}, i.e., $\mathcal{O}(n^2 d + n(kr)^2)$.
While there are two additional factors, namely, $\mathcal{O}(nkr)$ for checking neighbors after identifying a collinear triplet and $\mathcal{O}(n^2)$ for finding line intersections, these are dominated by existing terms and can be ignored.
The practical complexity is much lower, and the attack runs in at most three minutes on all main datasets.

\descr{Accuracy Analysis.}
Next, we analyze \recon{}; the following theorems give theoretical guarantees for its reconstruction rule, achieving perfect precision and a lower bound on expected recall.

\begin{theorem}[\recon{} perfect precision]
	\vspace{-0.05cm}
  \label{theo:2.1}
  Under Assumptions~\ref{ass:real}--\ref{ass:k}, the records reconstructed by \recon{} are guaranteed to be real, i.e., the attack achieves perfect precision (with probability 1).\vspace{-0.2cm}
\end{theorem}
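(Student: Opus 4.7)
The plan is to show that every point $p$ output by \recon{} lies in $X^{1}_{real}$. First I would establish a canonical correspondence between detected lines and real endpoint pairs: because every SMOTE record satisfies $x_{syn} = x_i + u(x_j - x_i)$ with $u \in (0, 1)$ and $x_i, x_j \in X^{1}_{real}$, each synthetic minority sample lies strictly inside the segment connecting two real minority records, and by Assumption~\ref{ass:noncon} any three collinear synthetic points must lie on a unique such segment. Consequently, each detected line $L$ is canonically tied to a unique pair $(a_L, b_L) \in X^{1}_{real} \times X^{1}_{real}$. The role of Assumption~\ref{ass:real} here is to guarantee that the underlying segments are genuinely one-dimensional and not collapsed by discrete/lattice effects, while Assumption~\ref{ass:k} ensures SMOTE's neighbor procedure produces enough lines with at least three synthetic points to be detected in the first place.

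Next I would analyze two-line intersections. If $p$ is the intersection of two detected lines $L, L'$ with endpoint pairs $\{a_L, b_L\}$ and $\{a_{L'}, b_{L'}\}$, and if $p \in X^{1}_{real}$, then the three records $p, a_L, b_L$ are collinear on $L$; Assumption~\ref{ass:noncon} then forces $p \in \{a_L, b_L\}$, and symmetrically $p \in \{a_{L'}, b_{L'}\}$. Hence every real two-line intersection is necessarily a shared real endpoint of the two detected lines. Applying this pairwise, if a candidate $p$ lies on three distinct detected lines $L_1, L_2, L_3$ and $p$ is real, then $p$ must be a common real endpoint of all three.

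What remains, and what I expect to be the main obstacle, is excluding the case that three detected lines are concurrent at a non-real point $p$. Translating through the first step, this would mean three distinct real endpoint pairs span three lines meeting at a single point outside $X^{1}_{real}$. To rule this out I would combine Assumption~\ref{ass:noncon} with the continuous feature structure of Assumption~\ref{ass:real}: a direct linear-algebra argument should show that three such concurrent lines force a rank condition on the difference vectors $b_{L_i} - a_{L_i}$, and that any solution of this condition ultimately embeds a collinear triple of real records, contradicting Assumption~\ref{ass:noncon}. If this direct argument leaves residual edge cases, a fallback is a genericity argument: under the continuous distribution implied by Assumption~\ref{ass:real}, the set of endpoint configurations producing a spurious triple concurrence has Lebesgue measure zero. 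Closing this step establishes that every retained candidate is a shared real endpoint and therefore lies in $X^{1}_{real}$, completing the proof of perfect precision.
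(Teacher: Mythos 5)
Your first two steps coincide with the paper's argument: detected lines correspond to pairs of real minority records, and any \emph{real} point lying on such a line must, by Assumption~\ref{ass:noncon}, be one of its two endpoints, so a real intersection of several detected lines is a shared endpoint. Where you diverge is in treating the spurious-concurrence case (three detected lines meeting at a point \emph{outside} $X^{1}_{real}$) as a step requiring proof; the paper simply asserts that ``the only point lying on three or more such lines is the shared real endpoint'' and does not argue it further.

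Your proposed direct argument for that step, however, does not work: three lines, each through a distinct pair of real records, can be concurrent at a non-real point without forcing any collinear triple of real records. Concretely, pick a point $p$, take three distinct lines through $p$, and place two real minority points on each line away from $p$; no three of the six real points are collinear, yet the three lines meet at $p \notin X^{1}_{real}$. So no rank condition on the difference vectors $b_{L_i}-a_{L_i}$ can ``embed a collinear triple,'' and Assumption~\ref{ass:noncon} alone cannot exclude this configuration. What does close the gap is your fallback: such exact concurrences (like three synthetic points from different segments being accidentally collinear) are non-generic, measure-zero events for continuous data and continuous interpolation weights $u$ -- which is also the unstated crutch behind the paper's one-line claim; note this needs a genericity or general-position assumption beyond the literal statement of Assumption~\ref{ass:real} (real-valued features alone do not imply general position). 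A minor point: the role of Assumption~\ref{ass:k} is to make the ``at least three supporting lines'' filter satisfiable (each real point has at least three incident segments, enough to triangulate it), not to guarantee that individual lines carry three synthetic points -- that is a recall, not a precision, consideration.
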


\begin{proof}
By SMOTE construction, each synthetic point lies on a segment strictly between two real endpoints.
Accordingly, every detected line in $X^{1}_{syn}$ (obtained from collinear sets identified by the local search in lines 8-17 of Algorithm \ref{alg:recon}) corresponds uniquely to a pair of real records.
The local search correctly groups all synthetic points from that pair into a single collinear set without introducing spurious collinearities.

Under the global non-collinearity assumption, the only point lying on three or more such lines is their shared real endpoint.
\recon{} exploits this by intersecting detected lines (lines 23-25) and retaining only points supported by at least three distinct lines (line 27).
Since synthetic points lie on exactly one SMOTE line, whereas real endpoints lie on three or more, any retained intersection must be a true real point.
Hence, every reconstructed record is real, so precision is $1$ (with probability 1). %
\end{proof}

For clarity and tractability, the next theorem uses a simplified formulation with directed edges and a Poisson approximation, giving a conservative bound (the {\em approximate} bound) that ignores overlapping neighbor relations in the SMOTE graph.

\begin{theorem}[\recon{} expected recall (approximate)]
	\vspace{-0.05cm}
  \label{theo:2.2}
  Let $\lambda = \frac{n_0 - n_1}{n_1k}$.
  Under Assumptions~\ref{ass:real}--\ref{ass:k} and using Poisson approximation (treating the number of synthetic points per segment as Poisson), the expected recall of \recon{} satisfies:\vspace{-0.1cm}
  \noindent
  \refstepcounter{equation}%
  \[
  \mathbb{E}[\mathrm{Recall}] \ge \tag{\theequation}\label{bound:approx}  \max\!\left\{0,\ \frac{k\!\left(1-e^{-\lambda}\!\left(1+\lambda+\tfrac{\lambda^2}{2}\right)\right)-2}{k-2}\right\}.
  \]
\end{theorem}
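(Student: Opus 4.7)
The plan is to reduce the expected recall to a one-sided tail probability for a sum of independent Bernoulli indicators, and then lower-bound that tail via a reverse Markov argument.

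First, I would model the number of synthetic points falling on each of the $n_1 k$ directed SMOTE segments as i.i.d.\ Poisson with mean $\lambda = \tfrac{n_0-n_1}{n_1 k}$; this Poissonizes the multinomial induced by SMOTE uniformly drawing $x_i \in X^{1}_{real}$ and then $x_j \in N(x_i)$. Since \recon{} needs at least three collinear synthetic samples to certify a line, a single segment is detected with probability
\[
p \;:=\; \Pr[\mathrm{Poisson}(\lambda) \ge 3] \;=\; 1 - e^{-\lambda}\!\left(1+\lambda+\tfrac{\lambda^2}{2}\right).
\]
Now fix a real minority point $x$; it is an endpoint of $k$ such segments, and by Theorem~\ref{theo:2.1} \recon{} reconstructs $x$ exactly when at least three of these segments are detected (three distinct lines through $x$ certify it via their concurrent intersection). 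Letting $X \sim \mathrm{Binomial}(k,p)$ count detected segments at $x$, then by symmetry across real points $\mathbb{E}[\mathrm{Recall}] = \Pr[X \ge 3]$ and $\mathbb{E}[X] = kp$.

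The main step is the reverse Markov bound. Using the tail-sum identity for a non-negative integer variable bounded by $k$,
\[
kp \;=\; \mathbb{E}[X] \;=\; \sum_{j=1}^{k}\Pr[X \ge j] \;\le\; 2 + (k-2)\Pr[X \ge 3],
\]
where $\Pr[X \ge 1], \Pr[X \ge 2] \le 1$ and $\Pr[X \ge j] \le \Pr[X \ge 3]$ for $j = 3,\ldots,k$ (Assumption~\ref{ass:k} guarantees $k-2 \ge 1$ such terms). Rearranging yields $\Pr[X \ge 3] \ge \tfrac{kp-2}{k-2}$, and since probability is non-negative one takes the maximum with $0$ to obtain the stated bound.

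The delicate point, and the source of the ``approximate'' qualifier, is the Poissonization: the true joint law of the segment counts is a multinomial over $n_1 k$ cells with $n_0 - n_1$ trials, so the per-segment counts only become independent in an appropriate Poisson limit (and in fact inherit weak negative correlation in the exact multinomial). A secondary simplification is treating each real minority point as the endpoint of exactly $k$ incident segments, which disregards the asymmetry of the $k$-NN relation; accounting for either refinement would preserve the dominant $1-\exp(-\Theta(r/k))$ behaviour consistent with the paper's earlier claim, but would spoil the clean closed form.
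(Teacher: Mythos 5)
Your proposal is correct and follows essentially the same route as the paper: Poisson-approximate the per-segment counts, note a point is reconstructed once at least three of its $k$ incident segments carry three or more synthetic samples, and apply a reverse-Markov bound $\mathbb{E}[X]\le 2+(k-2)\Pr[X\ge 3]$ (your tail-sum derivation is equivalent to the paper's indicator-splitting of $\mathbb{E}[S_i]$). The only cosmetic difference is that you assert $X\sim\mathrm{Binomial}(k,p)$, i.e.\ independence of the incident-segment indicators, which the argument never actually needs — the paper uses only $\mathbb{E}[S_i]=kp_{\mathrm{edge}}$ and $S_i\le k$ — and your closing caveats match the paper's own remarks about the exact bound in the appendix.
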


\begin{proof}
At each generation step, SMOTE first selects a minority record $x_i$ uniformly from the $n_1$ available, and then one of its $k$ nearest neighbors $x_j$ uniformly.
Thus, each of the $n_1k$ possible minority-neighbor (directed) segments is chosen with probability $\frac{1}{n_1k}$ at every step.

Let $C_{ij}$ denote the number of synthetic points generated on segment $(x_i, x_j)$.
Since each of the $n_0{-}n_1$ synthetic points is assigned independently to a segment with probability $\frac{1}{n_1k}$, the vector of all $C_{ij}$ follows a multinomial distribution with $\sum_{ij}C_{ij}= n_0{-}n_1$, and each $C_{ij}$ is marginally distributed as $\mathrm{Binom}(n_0{-}n_1, \frac{1}{n_1k})$.
For analytic tractability, we approximate this by $\mathrm{Poisson}(\lambda)$ with mean $\lambda = \frac{n_0 - n_1}{n_1k}$.
This is standard when $n_0{-}n_1$ is large and $\frac{1}{n_1k}$ is small, which holds in practice.

A segment is reconstructed if $C_{ij}\ge 3$.
The probability of this is $p_{\mathrm{edge}} = \Pr\{\mathrm{Poisson}(\lambda)\ge 3\} = 1-e^{-\lambda}\!\left(1+\lambda+\tfrac{\lambda^2}{2}\right)$.
Now consider a fixed record $x_i$, and let $S_i$ denote the number of reconstructed segments incident to it.
Therefore, $\mathbb{E}[S_i] = k\,p_{\mathrm{edge}}$.
Moreover, by Assumption~\ref{ass:k}, once $S_i \ge 3$, the point $x_i$ is uniquely identifiable, since three non-collinear reconstructed segments suffice to triangulate its location.
To lower bound $\Pr\{S_i \ge 3\}$, observe that $\mathbb{E}[S_i] = \mathbb{E}[S_i \mathbb{I}\{S_i \le 2\}] + \mathbb{E}[S_i \mathbb{I}\{S_i \ge 3\}]
\le 2\,\Pr\{S_i \le 2\} + k\,\Pr\{S_i \ge 3\}$ (since $S_i \le k$).
As $\Pr\{S_i \le 2\} = 1 - \Pr\{S_i \ge 3\}$, this yields $\mathbb{E}[S_i] \le 2 + (k-2)\Pr\{S_i \ge 3\}$, hence $\Pr\{S_i \ge 3\} \ \ge\ \frac{\mathbb{E}[S_i]-2}{k-2} = \frac{k\,p_{\mathrm{edge}}-2}{k-2} := A_{id}$.

This is the probability that $x_i$ is identifiable.
Since recall is the fraction of minority records that are identifiable, its expectation equals the average of these probabilities over all $n_1$ records.
Because each $x_i$ is treated symmetrically in SMOTE and we look at directed segments, the average equals the bound derived above.
Hence we obtain the stated lower bound on the expected recall. %
\end{proof}

\descr{Remarks.}
By rearranging Equation~\ref{bound:approx}, we get
$ 1-\mathbb{E}[\mathrm{Recall}] \;\le\; \frac{k}{k-2}\,e^{-\lambda}(1+\lambda+\tfrac{\lambda^{2}}{2})$,
which in turn means $\mathbb{E}[\mathrm{Recall}]\to 1$ as $\lambda\to\infty$, with a convergence rate exponential in $\lambda (= \frac{n_0 - n_1}{n_1k} = \frac{r - 1}{k})$.

In Appendix~\ref{app:tight}, we provide a more detailed analysis and derive a tighter bound without the simplifications of Theorem~\ref{theo:2.2}; we call it the {\em exact} bound.
Finally, in Appendix~\ref{app:viz}, we visualize the differences between the bounds under various conditions.

\section{Experimental Evaluation}
\label{sec:exp}
We now evaluate the effectiveness of our novel attacks, along with additional methods geared to assess privacy leakage in SMOTE, both as a data augmentation and synthetic data generation technique.
Specifically, we consider: 1) current practices such as naive distinguish (via a classifier) and privacy metrics (i.e., DCR from synthetic to real records~\citep{zhao2021ctab}), 2) state-of-the-art Membership Inference Attacks (MIAs)~\citep{shokri2017membership, carlini2022membership}, which to the best of our knowledge have not yet been applied against SMOTE, and 3) the \distin{} and \recon{} attacks.
Overall results are summarized in Table~\ref{tab:distin}.

\begin{wrapfigure}[12]{r}{0.45\textwidth}
  \captionsetup{type=table}
	\vspace{6pt}
  \centering
  \footnotesize
  \setlength{\tabcolsep}{4pt}
  \begin{tabular}{llrrr}
    \toprule
    \textbf{Dataset}      & \textbf{Target}   & \boldmath{$r$}      & \boldmath{$n$}         & \boldmath{$d$}  \\
    \midrule
    ecoli                 & imU              & 8.6                  & 336                    & 7               \\
    abalone               & 7                & 9.7                  & 4,177                  & 10              \\
    car\_eval\_34         & vgood            & 12                   & 1,728                  & 21              \\
    solar\_flare\_m0      & M-0              & 19                   & 1,389                  & 32              \\
    car\_eval\_4          & vgood            & 26                   & 1,728                  & 21              \\
    yeast\_me2            & ME2              & 28                   & 1,484                  & 8               \\
    mammography           & minority         & 42                   & 11,183                 & 6               \\
    abalone\_19           & 19               & 130                  & 4,177                  & 10              \\
    \bottomrule
  \end{tabular}
  \vspace{-7pt}
  \caption{Main datasets overview, where $r$ denotes the imbalance ratio ($n_0/n_1$), $n$ the number of records, and $d$ the number of features.}
  \label{tab:data}
\end{wrapfigure}

\descr{Datasets.}
We conduct our main experiments on eight standard imbalanced datasets, each with a binary classification task, obtained from the imblearn library~\citep{lemaitre2017imbalanced} (originally from the UCI ML Repository) and used in prior work~\citep{ding2011diversified, rosenblatt2025differential}
These datasets vary significantly in size (336 to 11,183 records), dimensionality (6 to 32 features), imbalance ratios (8.6 to 130), and prediction task (target), as shown in Table~\ref{tab:data}.

\descr{Implementations.}
We use the standard imblearn~\citep{lemaitre2017imbalanced} implementation of SMOTE and sklearn~\citep{pedregosa2011scikit} for classifiers.
Both \distin{} and \recon{} are highly efficient, running in under three minutes on any dataset from Table~\ref{tab:data} on an Apple M4 MacBook with 24GB RAM.
The naive methods are similarly fast, while the MIAs take up to 30 minutes per dataset.
We will release the source code for our attacks along with the final version of the paper.

\subsection{Augmented Data}
\label{subsec:aug}
We compare the three approaches on augmented data, with results for all datasets shown in Table~\ref{tab:aug}.

\descr{Naive Distinguish}
is a popular but arguably misguided approach for telling apart real and synthetic records by training a classifier~\citep{snoke2018general, el2022utility, qian2023synthcity, sdmetrics}.
Half of the real and half of the synthetic data are used to train a Random Forest classifier, with testing performed on the remaining data.
For each dataset, we run 5 independent SMOTE generations and train 5 classifiers per run, reporting averaged results.
The method severely underestimates privacy risk (see the two leftmost columns in Table~\ref{tab:aug}; precision and recall $\approx 0$) as it is capable of capturing only distributional differences, not record-level leakage.

\descr{MIA.}
Next, we evaluate MIAs~\citep{shokri2017membership, carlini2022membership} using the repeated classification game from Section~\ref{sec:prelim}.
For a given target record, we train 200 classifiers (a multi-layer perceptron with two hidden layers) on augmented datasets generated via SMOTE: half of the training datasets include the target record, and half exclude it.
We then use the classifiers' predictions on the target to simulate an adversary's confidence in distinguishing membership, and calculate AUC.
Following prior work~\citep{ye2024leave, guepin2024lost}, we train target-specific attacks in a leave-one-out setting, which provides a more accurate estimate of privacy leakage.
This procedure is repeated for 100 randomly selected targets (or all minority records), and we report the average.
Overall, this requires training roughly 20k SMOTE models and classifiers per dataset.

Looking at Table~\ref{tab:aug} (fourth column), the average AUC is 0.68, with half of the datasets exceeding 0.7, which indicates substantial privacy leakage.
The lowest scores appear in datasets with the smallest imbalance (ecoli and abalone), where the proportion of synthetic data is relatively low.
Mammography also shows a low score, likely because its large number of records reduces the influence of any single individual.
These results are therefore not entirely surprising.

We also conduct an additional MIA experiment, training classifiers solely on the real data, to test the intuition that SMOTE enhances the sensitivity of minority records in the augmented data, as they directly contribute to generating synthetic samples.
As expected, targets become more vulnerable when augmentation is applied -- average AUC increases by 17\% (comparing the third and fourth columns in Table~\ref{tab:aug}).
Larger imbalance further amplifies this effect.
While similar intuitions have been noted previously~\citep{rosenblatt2025differential}, they were not supported by empirical evidence.

\begin{table}[t!]
  \centering
  \footnotesize
  \setlength{\tabcolsep}{4pt}
  \begin{tabular}{lr|rrrrrrr}
    \toprule
    \textbf{Dataset}      & \boldmath{$r$}  & \multicolumn{2}{c}{\textbf{Naive distinguish}}  & \multicolumn{2}{c}{\textbf{MIA}}                   & \multicolumn{2}{c}{\distin{}}             \\
                          &                 & \multicolumn{2}{c}{\boldmath{$D_{aug}$}}        & \boldmath{$D_{real}$}      & \boldmath{$D_{aug}$}  & \multicolumn{2}{c}{\boldmath{$D_{aug}$}}  \\
                          &                 & (Precision)          & (Recall)                 & (AUC)                      & (AUC)                 & (Precision)        & (Recall)             \\
    \midrule
    ecoli                 & 8.6             & 0.00 $\pm$ 0.00      & 0.00 $\pm$ 0.00          & 0.50 $\pm$ 0.04            & 0.50 $\pm$ 0.05       & 1.00 $\pm$ 0.00    & 1.00 $\pm$ 0.00      \\
    abalone               & 9.7             & 0.03 $\pm$ 0.03      & 0.00 $\pm$ 0.01          & 0.57 $\pm$ 0.03            & 0.58 $\pm$ 0.04       & 1.00 $\pm$ 0.00    & 1.00 $\pm$ 0.00      \\
    car\_eval\_34         & 12              & 0.01 $\pm$ 0.02      & 0.01 $\pm$ 0.01          & 0.60 $\pm$ 0.03            & 0.73 $\pm$ 0.08       & 1.00 $\pm$ 0.00    & 1.00 $\pm$ 0.00      \\
    solar\_flare\_m0      & 19              & 0.01 $\pm$ 0.03      & 0.00 $\pm$ 0.01          & 0.79 $\pm$ 0.03            & 0.97 $\pm$ 0.03       & 1.00 $\pm$ 0.00    & 1.00 $\pm$ 0.00      \\
    car\_eval\_4          & 26              & 0.00 $\pm$ 0.00      & 0.00 $\pm$ 0.00          & 0.59 $\pm$ 0.03            & 0.75 $\pm$ 0.10       & 1.00 $\pm$ 0.00    & 1.00 $\pm$ 0.00      \\
    yeast\_me2            & 28              & 0.00 $\pm$ 0.00      & 0.00 $\pm$ 0.00          & 0.51 $\pm$ 0.04            & 0.57 $\pm$ 0.09       & 1.00 $\pm$ 0.00    & 1.00 $\pm$ 0.00      \\
    mammography           & 42              & 0.01 $\pm$ 0.02      & 0.00 $\pm$ 0.00          & 0.54 $\pm$ 0.03            & 0.56 $\pm$ 0.04       & 1.00 $\pm$ 0.01    & 1.00 $\pm$ 0.00      \\
    abalone\_19           & 130             & 0.00 $\pm$ 0.00      & 0.00 $\pm$ 0.00          & 0.58 $\pm$ 0.05            & 0.80 $\pm$ 0.12       & 0.99 $\pm$ 0.02    & 1.00 $\pm$ 0.00      \\
    \midrule
    average               &                 & 0.01 $\pm$ 0.01      & 0.00 $\pm$ 0.00          & 0.58 $\pm$ 0.03            & 0.68 $\pm$ 0.07       & 1.00 $\pm$ 0.00    & 1.00 $\pm$ 0.00      \\
    \bottomrule
  \end{tabular}
  \vspace{-7pt}
  \caption{Privacy attacks vs. augmented data.}
  \label{tab:aug}
\end{table}

\descr{\distingbold{}.}
Finally, we run \distin{} on 25 SMOTE generations and report average precision/recall (two rightmost columns in Table~\ref{tab:aug}).
As expected from our analysis, we achieve perfect results across all datasets and imbalance levels.
This shows that merely knowing SMOTE was used for augmentation is enough for an adversary to perfectly identify real records with minimal effort.

\subsection{Synthetic Data}
Next, we evaluate all attacks on synthetic data; see Table~\ref{tab:syn}.

\begin{table}[t]
  \centering
  \footnotesize
  \setlength{\tabcolsep}{4pt}
  \begin{tabular}{lc|cccc}
    \toprule
    \textbf{Dataset}      & \boldmath{$r$}  & \textbf{Naive metrics} & \textbf{MIA}         & \multicolumn{2}{c}{\recon{}}               \\
                          &                 & (Accuracy)             & (AUC)                & (Precision)           & (Recall)           \\
    \midrule
    ecoli                 & 8.6             & 0.19 $\pm$ 0.15        & 0.93 $\pm$ 0.05      & 1.00 $\pm$ 0.00       & 0.43 $\pm$ 0.02    \\
    abalone               & 9.7             & 0.21 $\pm$ 0.17        & 0.65 $\pm$ 0.07      & 1.00 $\pm$ 0.00       & 0.62 $\pm$ 0.01    \\
    car\_eval\_34         & 12              & 0.00 $\pm$ 0.00        & 0.97 $\pm$ 0.01      & 1.00 $\pm$ 0.00       & 0.83 $\pm$ 0.03    \\
    solar\_flare\_m0      & 19              & 0.03 $\pm$ 0.06        & 1.00 $\pm$ 0.00      & 1.00 $\pm$ 0.00       & 0.95 $\pm$ 0.02    \\
    car\_eval\_4          & 26              & 0.01 $\pm$ 0.04        & 1.00 $\pm$ 0.00      & 1.00 $\pm$ 0.00       & 1.00 $\pm$ 0.00    \\
    yeast\_me2            & 28              & 0.20 $\pm$ 0.12        & 0.99 $\pm$ 0.01      & 1.00 $\pm$ 0.00       & 1.00 $\pm$ 0.00    \\
    mammography           & 42              & 0.25 $\pm$ 0.15        & 0.91 $\pm$ 0.04      & 1.00 $\pm$ 0.00       & 1.00 $\pm$ 0.00    \\
    abalone\_19           & 130             & 0.37 $\pm$ 0.14        & 1.00 $\pm$ 0.00      & 1.00 $\pm$ 0.00       & 1.00 $\pm$ 0.00    \\
    \midrule
    average               &                 & 0.16 $\pm$ 0.10        & 0.93 $\pm$ 0.02      & 1.00 $\pm$ 0.00       & 0.85 $\pm$ 0.01    \\
    \bottomrule
  \end{tabular}
  \vspace{-7pt}
  \caption{Privacy attacks vs. synthetic data.}
  \label{tab:syn}
\end{table}

\descr{Naive Metrics.}
A widely used approach for evaluating privacy in synthetic data is the Distance to Closest Record~(DCR)~\citep{zhao2021ctab}, which measures the average distance between synthetic and real records.
DCR has been commonly applied to SMOTE and modern diffusion models~\citep{kotelnikov2023tabddpm, zhang2024mixed, pang2024clavaddpm, mueller2025continuous}, but its interpretation is limited -- an average distance alone provides little insight into privacy risks.
To address this, we use a linkability attack~\citep{giomi2022unified}, which builds on DCR and reports the accuracy with which an adversary could link two partial feature sets of a real record using synthetic data.
For each dataset, we train 5 SMOTE models and evaluate linkability 5 times with varying feature subsets.

The results are unstable (see the leftmost column in Table~\ref{tab:syn}): scores differ from zero only for low-dimensional settings ($d \leq 10$), while higher-dimensional datasets yield large variances that render DCR unreliable.
This is expected, as DCR treats all features equally and is known to be an inadequate privacy measure~\citep{annamalai2024what, ganev2025inadequacy, yao2025dcr}.

\descr{MIA.}
We evaluate MIAs on synthetic data using the repeated classification game (similar to Section~\ref{subsec:aug}).
We rely on the GroundHog attack~\citep{stadler2022synthetic}, one of the most popular MIAs for synthetic tabular data.
GroundHog extracts statistical features from generated datasets -- such as column-wise minimum, mean, median, maximum, and pairwise correlations -- and uses them to train a meta-classifier, which is then applied to unlabeled real and synthetic feature sets.
To generate training features, we train 400 SMOTE models for in/out training features and another 200 SMOTE models for in/out testing features.
Repeating this for 100 targets yields about 60k models per dataset.

\begin{wrapfigure}[23]{r}{0.30\textwidth}  %
	\vspace{-5pt}
  \begin{minipage}{0.30\textwidth}
    \centering
		\includegraphics[width=\linewidth]{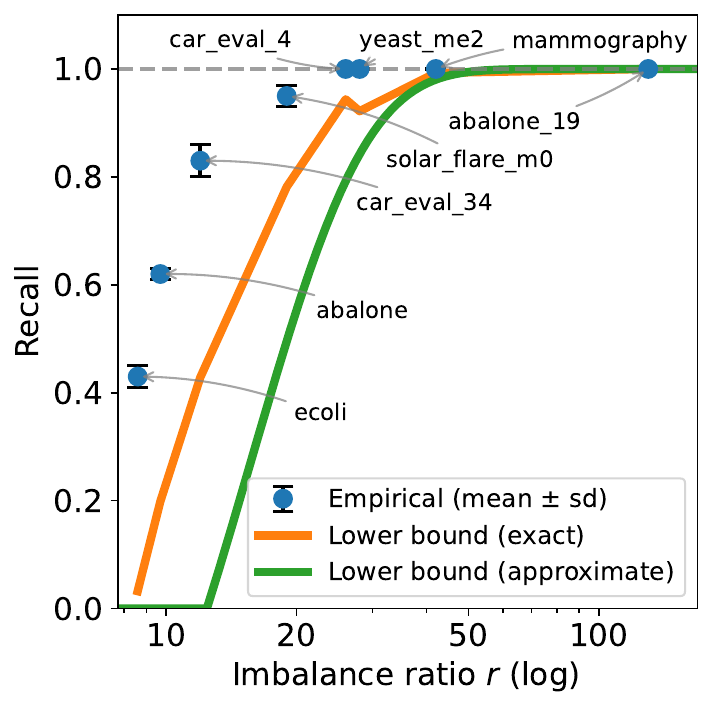}
		\vspace{-20pt}
	  \captionof{figure}{\recon{} recall and lower bounds (per dataset).}
		\label{fig:recon_data}
    \vspace{5pt}  %
		\includegraphics[width=\linewidth]{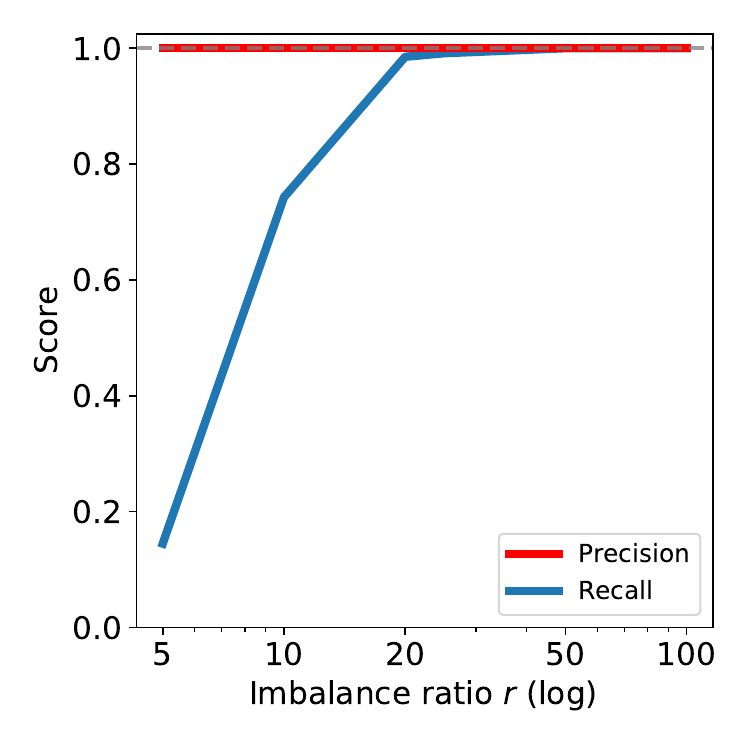}
		\vspace{-20pt}
	  \captionof{figure}{\recon{} performance w/ varying $r$ (all datasets).}
		\label{fig:recon_imb}
  \end{minipage}
\end{wrapfigure}

As shown in Table~\ref{tab:syn} (second column), this results in substantial privacy leakage: AUC exceeds 0.9 in all but one dataset.
The exception is abalone, which has the second-lowest imbalance ratio and the second-largest number of records, potentially leading to lower sensitivity.
When imbalance increases (abalone\_19), the MIA AUC rises to 1.
Overall, these results demonstrate that SMOTE-generated data is highly susceptible to MIAs, even beyond trivial cases where data domain characteristics mainly drive leakage~\citep{ganev2025importance, ganev2025understanding}.

\descr{\reconbold{}.}
Next, we apply \recon{} on 25 SMOTE generations per dataset, reporting average precision/recall (last two columns in Table~\ref{tab:syn}).
The attack achieves perfect precision on all datasets, which, as motivated in Section~\ref{sec:att}, is the most critical metric for reconstruction.
Recall is also very high (see Figure~\ref{fig:recon_data}), with an average of 0.85.
It increases quickly with class imbalance, reaching 1 when $r \geq 20$.
The recall values (per dataset) are in line with the expected approximate/exact bounds predicted by Theorem~\ref{theo:2.2} and~\ref{theo:3}.

To further validate the expected bounds at finer granularity, we vary the imbalance ratio $\{5, 10, 20, 25, 50, 75, 100\}$ across all datasets and plot the average performance in Figure~\ref{fig:recon_imb}.
As expected, recall increases exponentially with $r$ (for fixed $k$), reaching 1 around imbalance 20.
Overall, these findings highlight the risks of relying on SMOTE for synthetic data generation: in realistic settings, an adversary can reconstruct all real records with perfect confidence.

\subsection{Take-Aways}
We show that MIAs achieve high AUC across numerous targets vs. SMOTE: 0.68 against classifiers trained on augmented data and 0.93 against synthetic data.
Moreover, the sensitivity of minority records increases by an average of 17\% when classifiers are trained on augmented rather than original training data.
Finally, our attacks, \distin{} and \recon{}, are able to i) distinguish real minority records from synthetic ones in augmented data, and ii) reconstruct real minority records from synthetic data with minimal assumptions and near-perfect accuracy.

\section{\distin{} and \recon{} with Relaxed Assumptions}
\label{subsec:extra}
In this section, we test the robustness of our attacks, \distin{} and \recon{}, on augmented/synthetic data while relaxing our assumptions one at a time, e.g., using high-dimensional data, mixed-type data, and $k=2$; results are shown in Table~\ref{tab:high},~\ref{tab:mix}, and~\ref{tab:k2} in Appendix~\ref{app:extra}.
Additionally, in Appendix~\ref{app:extra}, we evaluate our attacks on perturbed SMOTE datasets (i.e., linear interpolation with added random noise) and provide a heuristic for running the attacks without knowledge of $k$ and $r$.

\descr{High-Dimensional Data.}
First, we test how the attacks scale to two high-dimensional datasets -- up to 96,690 records and 50 features (see Table~\ref{tab:high} and Appendix~\ref{app:extra} for results and details about the higgs and miniboone datasets, both with $r=25$).
Both \distin{} and \recon{} achieve perfect precision and recall and, as analyzed in Section~\ref{sec:att}, scale well with $n$ and $d$, taking no more than 8 minutes per dataset, which is highly practical.

\descr{Mixed-Type Data.}
Next, we show that our attacks also apply to mixed-type data, relaxing Assumption~\ref{ass:real}; see Table~\ref{tab:mix} and Appendix~\ref{app:extra} for details on the cardio and churn datasets, each containing an equal mix of numerical/categorical features, and with $r=25$.
To generate data, we use SMOTE-NC~\citep{chawla2002smote} -- introduced in the original SMOTE paper and designed for mixed data  -- via the standard imblearn implementation~\citep{lemaitre2017imbalanced}.
For the attacks, we simply ignore categorical features and operate on the continuous ones.
As before, we obtain perfect precision and very high recall on both datasets.
This approach can also be applied to one-hot encoded data.

\descr{SMOTE with $\mathbf{k=2.}$}
Finally, we evaluate the attacks on the eight main datasets using SMOTE with $k=2$, relaxing Assumption~\ref{ass:k} (see Table~\ref{tab:k2} in Appendix~\ref{app:extra}).
The performance of \distin{} remains unaffected, achieving perfect precision and recall, as expected.
As for \recon{}, average recall drops to 0.52 (a 39\% decrease compared to SMOTE with $k=5$ in Table~\ref{tab:syn}) because each real record participates in fewer lines, making it harder to reach the support threshold; only records that serve as neighbors to other records beyond their two closest neighbors can be successfully reconstructed.
Precision remains perfect as all reconstructed points are still accurate.
Nevertheless, reconstructing half the real minority records with perfect confidence is a serious privacy breach.

\section{Conclusion}
Our work highlights the fundamental privacy limitations of SMOTE~\citep{chawla2002smote}, one of the most widely adopted techniques for improved learning on imbalanced data.
The effectiveness of our novel, near assumption-free attacks (\distin{} and \recon{}), demonstrates that real minority records -- precisely the ones SMOTE aims to better represent -- are exposed to significant, previously underestimated privacy risk.
Importantly, this also shows that using SMOTE as a baseline with DCR to evaluate privacy is unreliable and can provide a false sense of security.
Nonetheless, SMOTE remains an effective and easy-to-use technique in non-privacy-sensitive applications where utility is the primary concern.
We are confident our findings will be valuable to researchers and practitioners deploying solutions that process or release sensitive data, motivate them to avoid SMOTE as a privacy benchmark, and encourage them to adopt more robust privacy-preserving techniques.

\descr{Limitations and Future Work.}
Our attacks currently operate on continuous data and are primarily tested on the original SMOTE implementation.
While certain numerical instabilities/edge cases are theoretically possible (e.g., a synthetic point appearing collinear with two unrelated real points), their probability is effectively zero in high-dimensional datasets with high numerical precision, and we did not observe any such case in our experiments.
Additionally, our findings generalize to many SMOTE variants -- such as BorderlineSMOTE~\citep{han2005borderline}, ADASYN~\citep{he2008adasyn}, SVMSMOTE~\citep{nguyen2009svmsmote}, and cluster/hybrid-based methods~\citep{douzas2018improving} -- as they all rely on line-segment interpolation to generate synthetic samples.
In contrast, our attacks are unlikely to be successful against variants like G-SMOTE~\citep{douzas2019geometric} and GI-SMOTE~\citep{chen2024improved}, which generate synthetic points within regions rather than strictly along lines.
Nevertheless, these variants are not inherently privacy-preserving and are still likely to remain vulnerable to MIAs.
Extending our attacks and developing robust defenses for such methods is a promising direction for future work.

Privacy-preserving variants of SMOTE have also been proposed under the framework of Differential Privacy~\citep{dwork2006calibrating, dwork2014algorithmic}, including DP-SMOTE~\citep{lut2022privacy}, which adds noise when estimating point distributions/nearest neighbors, and SMOTE-DP~\citep{zhou2025smote}, which combines SMOTE with a DP generative model.
However, SMOTE-DP largely ignores SMOTE's increased sensitivity of minority records~\citep{lau2021statistical, lut2022privacy, rosenblatt2025differential}, a gap we confirm empirically (see Section~\ref{subsec:aug}).
As none of these approaches provides open-source implementations, we leave evaluating their effectiveness to future work.

\descr{Ethics Statement.}
Our work does not involve attacking live systems or private datasets.
Our goal is to demonstrate the importance of emphasizing privacy considerations and relying on established notions of privacy when processing sensitive, imbalanced data in critical domains.
To make our work reproducible, we release the source code for our attacks and experiments: \url{https://github.com/sascommunities/smote-mirrors}.

\setlength{\bibsep}{2.5pt}
{\small
\bibliographystyle{plainnat}

}

\appendix

\section{Tighter Lower Bound on \recon{} Recall}
\label{app:tight}
In this section, we derive a tighter lower bound on the recall of \recon{} by relaxing two of the assumptions in Section~\ref{subsec:recon}, namely the Poisson approximation and the one-directional counting of $C_{ij}$.
Specifically, we use the exact Binomial distributions and count $C_{ij}$ in both directions to obtain more accurate values.

Recall that at each step of the SMOTE algorithm, we choose an $x_{i} \in X^{1}_{real}$ uniformly at random and then independently select one of its $k$ nearest neighbors uniformly at random.
To capture this structure, we represent the minority data by a KNN graph $G=(X^1_{\text{real}}, E)$, where edges $E$ represent the neighboring relations among the minority samples.
As before, we use $N(x_i)$ to denote the $k$ nearest points to $x_i$ from the minority set.
For each $x_i \in X^1_{real}$, we add an edge $E_{i \to j}$ whenever $x_j \in N(x_i)$.
Each synthetic data generated by SMOTE is associated with exactly one edge.
Let $\alpha$ denote the probability that a nearest-neighbor relation is \textit{mutual}; i.e., the probability that if $x_j \in N(x_i)$ then also $x_i \in N(x_j)$.
In this case, a synthetic point lies on $E_{i\to j}$ if it was generated along $E_{i \to j}$ or along $E_{j \to i}$.
If $\alpha=1$, then all nearest-neighbor edges are mutual, and $\alpha=0$ corresponds to a completely one-sided nearest-neighbor graph, for which we usually refer to those edges as \textit{exclusive}.

Recall that an edge is \emph{reconstructed} if at least three synthetic records lie on its segment, and a real $x_i$ is \emph{identifiable} if there exist three reconstructed edges incident to $x_i$.
We denote the number of synthetic data points generated between $x_i$ and $x_j$ by $C_{ij}$.
For generating each synthetic point, SMOTE performs these selections independently, assigning every new sample to exactly one of the $n_1k$ directed edges with equal probability $1/(n_1k)$.
This independence is not an additional assumption -- it follows directly from the random sampling mechanism of SMOTE.
Consequently, the vector of all $C_{ij}$ follows a multinomial distribution with $\sum_{ij} C_{ij}=n_0{-}n_1$, and each component $C_{ij}$ is marginally $\mathrm{Binom}(n_0{-}n_1,\frac{1}{n_1k})$.

Because some directed edges in the SMOTE KNN graph represent mutual neighbor relationships, certain edges overlap.
We address this by distinguishing between one-way and mutual edges.
Let $B_{ij}:=\mathbb{I}\{x_i \in N(x_j) \;\mathrm{ and }\; x_j \in N(x_i)\}$ (\emph{mutuality indicator}) with $\Pr\{B_{ij}=1)=\alpha$.
Then
\[
C_{ij}\,\big|\,B_{ij}=0 \sim \mathrm{Binom}\!\Big(n_0{-}n_1,\tfrac{1}{n_1 k}\Big),
\qquad
C_{ij}\,\big|\,B_{ij}=1 \sim \mathrm{Binom}\!\Big(n_0{-}n_1,\tfrac{2}{n_1 k}\Big).
\]

Consider the following assumption regarding the structure of neighboring relations around each real minority record.

\begin{assumption}[Local non-degeneracy]
\label{ass:local-nondeg}
	For any $x_i \in X^{1}_{\text{real}}$, all edges $\{E_{i\to j} : x_j \in N(x_i)\}$ have pairwise distinct directions.
\end{assumption}

In the analysis of this section, we use Assumption~\ref{ass:local-nondeg} in place of Assumption~\ref{ass:noncon} (global non-collinearity), as it is a weaker, localized condition sufficient for establishing the lower bound on reconstruction recall.
In particular, if Assumption~\ref{ass:noncon} holds, then Assumption~\ref{ass:local-nondeg} automatically follows.
Under Assumption~\ref{ass:local-nondeg}, the intersection of any three reconstructed edges incident to $x_i$ uniquely identifies $x_i$.

\begin{lemma}[Reconstructed edge probability]\label{lem:edge}
For any edge of $G$,
\begin{equation*}
    \Pr\{C_{ij}\ge 3\}{=}(1-\alpha)\Pr\{\mathrm{Binom}(n_0{-}n_1,\tfrac{1}{n_1k})\ge 3\} + \alpha\Pr\{\mathrm{Binom}(n_0{-}n_1,\tfrac{2}{n_1k})\ge 3\}{=:}p_{\mathrm{edge}}(\alpha).
\end{equation*}
\end{lemma}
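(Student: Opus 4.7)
My approach is the standard law of total probability applied to the mutuality indicator $B_{ij}$. Since $B_{ij}\in\{0,1\}$ with $\Pr\{B_{ij}=1\}=\alpha$, it suffices to write
\[
\Pr\{C_{ij}\ge 3\} \;=\; \Pr\{B_{ij}=0\}\,\Pr\{C_{ij}\ge 3\mid B_{ij}=0\} + \Pr\{B_{ij}=1\}\,\Pr\{C_{ij}\ge 3\mid B_{ij}=1\},
\]
and then substitute the two conditional distributions that are already stated in the paragraph immediately preceding the lemma. The result then simply reads off as the weighted average of the two Binomial tail probabilities.

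\textbf{Justifying the conditional distributions.} The substantive content is to confirm that the conditional laws of $C_{ij}$ given $B_{ij}$ are indeed the two Binomials claimed. For this I would track SMOTE one synthetic sample at a time. At each of the $n_0-n_1$ i.i.d.\ generation steps, an anchor $x_a\in X^{1}_{\text{real}}$ is chosen uniformly (prob.\ $1/n_1$) and then a neighbor is chosen uniformly from $N(x_a)$ (prob.\ $1/k$). A synthetic point lands on the \emph{undirected} segment $\{x_i,x_j\}$ iff the step selects either the directed edge $E_{i\to j}$ or $E_{j\to i}$. When $B_{ij}=0$, exactly one of these directed edges exists (by definition of non-mutuality, only one of $x_j\in N(x_i)$ or $x_i\in N(x_j)$ holds), so the per-step success probability is $1/(n_1 k)$; when $B_{ij}=1$, both directed edges exist and the per-step success probability is $2/(n_1 k)$. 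Independence across steps then gives the two Binomial$(n_0-n_1,\cdot)$ laws stated in the excerpt.

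\textbf{Main obstacle and combination.} The only delicate point, and the one I would be most careful about, is the directed-versus-undirected bookkeeping: one must ensure no double counting in the mutual case and no missed contribution in the non-mutual case, which amounts to noting that the two events ``select $E_{i\to j}$'' and ``select $E_{j\to i}$'' are disjoint at a single step. Everything else is mechanical: plugging the two Binomial tail probabilities into the total-probability expansion yields
\[
\Pr\{C_{ij}\ge 3\} \;=\; (1-\alpha)\,\Pr\!\big\{\mathrm{Binom}(n_0{-}n_1,\tfrac{1}{n_1k})\ge 3\big\} + \alpha\,\Pr\!\big\{\mathrm{Binom}(n_0{-}n_1,\tfrac{2}{n_1k})\ge 3\big\},
\]
which is the stated $p_{\mathrm{edge}}(\alpha)$. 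I would also remark that this probability is the same for every edge of $G$ by the symmetry of SMOTE's uniform anchor/neighbor selection, which justifies the lemma's ``for any edge'' phrasing and allows this per-edge quantity to be reused in the subsequent recall bound.
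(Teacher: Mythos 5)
Your proposal is correct and follows essentially the same route as the paper: conditioning on the mutuality indicator $B_{ij}$ (law of total probability) and observing that in the non-mutual case only one directed edge contributes (per-step probability $\tfrac{1}{n_1k}$) while in the mutual case both do (per-step probability $\tfrac{2}{n_1k}$), exactly as the paper's proof and the conditional Binomial laws stated just before the lemma. Your extra care about disjointness of the two directed selections and independence across steps simply makes explicit what the paper leaves implicit.
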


\begin{proof}
Condition on $B_{ij}$ and compute the average.
If $B_{ij}=0$, then only one direction contributes to the count; if $B_{ij}=1$, both directions do. %
\end{proof}

\begin{lemma}[Lower-bound on per-node identifiability]
\label{lem:node}
Fix $x_i\in X^{1}_{\text{real}}$ and its $k$ outgoing directed edges
$\{E_{i\to j}: x_j\in N(x_i)\}$.
Then, we have
\begin{equation}
	\label{bound:exact}
    \Pr\{x_i\ \mathrm{identifiable}\} \;\ge\;\max\!\left\{\,0,\ \frac{k\,p_{\mathrm{edge}}(\alpha)-2}{k-2}\right\}  =: L_{id},
\end{equation}
and this lower bound is tight.
\end{lemma}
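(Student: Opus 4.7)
The plan is to mirror the Markov-style argument used for the approximate recall bound in Theorem~\ref{theo:2.2}, but this time plug in the exact per-edge probability $p_{\mathrm{edge}}(\alpha)$ supplied by Lemma~\ref{lem:edge}, and then separately justify tightness by exhibiting a joint law of the edge indicators that saturates the inequality.

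First, I would define $S_i := \sum_{j:\, x_j \in N(x_i)} \mathbb{I}\{C_{ij}\ge 3\}$, the number of reconstructed directed edges incident to $x_i$. By Lemma~\ref{lem:edge} and linearity of expectation, $\mathbb{E}[S_i] = k\,p_{\mathrm{edge}}(\alpha)$; notably, this equality is insensitive to the dependence structure among the $k$ indicators. I would then decompose $\mathbb{E}[S_i]=\mathbb{E}[S_i\mathbb{I}\{S_i\le 2\}]+\mathbb{E}[S_i\mathbb{I}\{S_i\ge 3\}]$ and use the deterministic bounds $S_i\le 2$ on the first event and $S_i\le k$ on the second to get $\mathbb{E}[S_i]\le 2\Pr\{S_i\le 2\}+k\Pr\{S_i\ge 3\}$. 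Substituting $\Pr\{S_i\le 2\}=1-\Pr\{S_i\ge 3\}$ and rearranging yields $\Pr\{S_i\ge 3\}\ge (k\,p_{\mathrm{edge}}(\alpha)-2)/(k-2)$, and combining with the trivial bound $\Pr\{S_i\ge 3\}\ge 0$ gives the $\max$ form. Assumption~\ref{assump:local-nondeg} then converts $\{S_i\ge 3\}$ into identifiability of $x_i$, since three reconstructed edges with pairwise distinct directions intersect only at $x_i$; this establishes inequality~\eqref{bound:exact}.

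For the tightness claim, I would construct a joint distribution of $(\mathbb{I}\{C_{ij}\ge 3\})_{j}$ that respects the marginal $p_{\mathrm{edge}}(\alpha)$ but saturates the Markov-style rearrangement. In the regime $k\,p_{\mathrm{edge}}(\alpha)\ge 2$, set all $k$ indicators equal to $1$ with probability $q := (k\,p_{\mathrm{edge}}(\alpha)-2)/(k-2)$, and otherwise pick a uniformly random unordered pair of edges and set exactly those two indicators to $1$. A short computation shows each edge has marginal probability $q + 2(1-q)/k = p_{\mathrm{edge}}(\alpha)$, while $\Pr\{S_i\ge 3\}=q$, matching $L_{id}$ exactly. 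In the complementary regime $k\,p_{\mathrm{edge}}(\alpha)<2$, an analogous $\{0,2\}$-supported law matches the marginal and gives $\Pr\{S_i\ge 3\}=0=L_{id}$.

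The main conceptual obstacle is that the edge indicators $\mathbb{I}\{C_{ij}\ge 3\}$ are not independent, since the $C_{ij}$'s all share the same underlying multinomial-style draws of synthetic points; this precludes Chernoff or Poissonization-style concentration arguments and forces reliance purely on first-moment information. The tightness construction makes this unavoidability explicit: without using joint-law information beyond the marginal given by Lemma~\ref{lem:edge}, no bound strictly better than $L_{id}$ can hold.
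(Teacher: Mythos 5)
Your proof is correct and follows essentially the same route as the paper: the same first-moment decomposition $\mathbb{E}[S_i]\le 2\Pr\{S_i\le 2\}+k\Pr\{S_i\ge 3\}$ with rearrangement, and tightness via a joint law of the edge indicators supported on $S_i\in\{2,k\}$ matching the marginal $p_{\mathrm{edge}}(\alpha)$. Your explicit verification of the marginal and the separate $\{0,2\}$-supported construction for the regime $k\,p_{\mathrm{edge}}(\alpha)<2$ merely spell out details the paper leaves implicit.
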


\begin{proof}
Declare an edge reconstructed if $C_{ij}\ge 3$ and set $E_{i\to j}^{\mathrm{rec}} := \mathbb{I}\{C_{ij}\ge 3\}$.
Let $S_i=\sum_{j=1}^k E_{i\to j}^{\mathrm{rec}}$.
From Lemma~\ref{lem:edge}, each edge has marginal
$\Pr\{E_{i\to j}^{\mathrm{rec}}=1\}=p_{\mathrm{edge}}(\alpha)$, so $\mathbb{E}[S_i]=k\,p_{\mathrm{edge}}(\alpha)$.
Then, we have
\[
\begin{aligned}
\mathbb{E}[S_i]
&= \mathbb{E}\!\big[S_i\,\mathbb{I}\{S_i\le 2\}\big]
 + \mathbb{E}\!\big[S_i\,\mathbb{I}\{S_i\ge 3\}\big] \\
&\le \mathbb{E}\!\big[2\,\mathbb{I}\{S_i\le 2\}\big]
 + \mathbb{E}\!\big[k\,\mathbb{I}\{S_i\ge 3\}\big]
 \qquad\text{(since $S_i\le k$ a.s.)} \\
&= 2\,\Pr\{S_i\le 2\} + k\,\Pr\{S_i\ge 3\}.
\end{aligned}
\]
Since $\Pr\{S_i\le 2\}=1-\Pr\{S_i\ge 3\}$, we obtain
\[
\mathbb{E}[S_i]\ \le\ 2 + (k-2)\,\Pr\{S_i\ge 3\},
\]
hence
\[
\Pr\{S_i\ge 3\}\ \ge\ \frac{\mathbb{E}[S_i]-2}{k-2}
=\frac{k\,p_{\mathrm{edge}}(\alpha)-2}{k-2}.
\]
Truncating at $0$ accommodates the trivial case $k\,p_{\mathrm{edge}}(\alpha)\le 2$.

Moreover, the bound cannot be improved using only the edge-wise success probabilities.
Consider constructing $(E_{i\to j}^{\mathrm{rec}})_{j=1}^k$ so that
$S_i=\sum_{j=1}^k E_{i\to j}^{\mathrm{rec}}$ takes values only in $\{2,k\}$.
Choose the mixture weights so that $\mathbb{E}[S_i]=k\,p_{\mathrm{edge}}(\alpha)$.
In this case, the inequality holds with equality, meaning that the lower bound is tight. %
\end{proof}

\begin{theorem}[\recon{} expected recall (exact)]
\vspace{-0.05cm}
\label{theo:3}
Under Assumptions~\ref{ass:real},~\ref{ass:k}, and~\ref{ass:local-nondeg}, we have
\begin{equation}
   \mathbb{E}[\mathrm{Recall}] \coloneqq \mathbb{E}[\frac{1}{n_1}\#\{x_i\in X^{1}_{\text{real}}:\ x_i\ \mathrm{identifiable}\}]  \;\ge\;  L_{id},
\end{equation}
where $L_{id}$ is defined in Equation~\ref{bound:exact}.
\end{theorem}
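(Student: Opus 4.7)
The plan is to reduce Theorem~\ref{theo:3} to a per-node statement and then invoke Lemma~\ref{lem:node} termwise. Concretely, writing the indicator of identifiability as $I_i := \mathbb{I}\{x_i \text{ identifiable}\}$, the recall random variable is $\mathrm{Recall} = \frac{1}{n_1}\sum_{i=1}^{n_1} I_i$, so by linearity of expectation
\[
\mathbb{E}[\mathrm{Recall}] \;=\; \frac{1}{n_1}\sum_{i=1}^{n_1} \Pr\{x_i \text{ identifiable}\}.
\]
Lemma~\ref{lem:node} gives $\Pr\{x_i \text{ identifiable}\} \ge L_{id}$, so averaging over $i$ immediately yields the claimed inequality. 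The proof is therefore essentially a one-line reduction, and the real content of the theorem lives in Lemmas~\ref{lem:edge} and~\ref{lem:node}, which have already been established.

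The step that deserves care is verifying that the lower bound in Lemma~\ref{lem:node} applies uniformly to every $i$. I would spell out that $\alpha$ is defined at the population level and that, by Lemma~\ref{lem:edge}, each directed edge $E_{i\to j}$ has the same marginal reconstruction probability $p_{\mathrm{edge}}(\alpha)$, regardless of which node is its tail. Combined with Assumption~\ref{assump:local-nondeg} (which is stated for every $x_i$ and guarantees that any three incident reconstructed edges pin down $x_i$), this means the derivation of $\Pr\{S_i\ge 3\}\ge L_{id}$ in Lemma~\ref{lem:node} is node-agnostic and the bound $L_{id}$ is the same constant for all $i$.

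The only potential subtlety is that the events $\{I_i = 1\}$ across different $i$ are highly dependent: the edge-reconstruction indicators $E_{i\to j}^{\mathrm{rec}}$ are shared between endpoints (especially when $B_{ij}=1$), and the per-segment counts $C_{ij}$ are also dependent through the multinomial constraint $\sum_{ij} C_{ij} = n_0 - n_1$. I would note explicitly that this dependence is irrelevant here, since linearity of expectation only requires marginal probabilities, not joint independence; that observation is what makes the proof go through without extra concentration arguments. The hardest conceptual piece was already absorbed into Lemma~\ref{lem:node} (the Markov-style manipulation that converts $\mathbb{E}[S_i]$ into a lower bound on $\Pr\{S_i \ge 3\}$), so Theorem~\ref{theo:3} itself requires no further technical machinery.
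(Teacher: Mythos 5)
Your proof matches the paper's: the paper likewise writes $\mathbb{E}[\mathrm{Recall}]=\frac{1}{n_1}\sum_{i=1}^{n_1}\Pr\{x_i\ \mathrm{identifiable}\}$ by linearity of expectation and applies Lemma~\ref{lem:node} termwise to conclude the bound. Your added remarks on the uniformity of $L_{id}$ across nodes and on why dependence between the events is harmless are correct elaborations of the same one-line argument.
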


\begin{proof}
By Lemma~\ref{lem:node}, we have $\Pr\{x_i\ \mathrm{identifiable}\}\ge L_{id}$ for every $i$, so
\[
\mathbb{E}[\mathrm{Recall}]
=\frac{1}{n_1}\sum_{i=1}^{n_1}\Pr\{x_i\ \mathrm{identifiable}\}\ \ge\ L_{id}.
\] %
\end{proof}

\section{Lower Bounds of \recon{} Recall Visualizations}
\label{app:viz}
To complement the theoretical results in Section~\ref{subsec:recon} (approximate bound, $A_{id}$) and Appendix~\ref{app:tight} (exact bound, $L_{id}$), we visualize the bounds under different conditions.

We start by showing the ratio between the exact bound and the approximate bound as a function of the imbalance ratio $r$ in Figure~\ref{fig:Lid-ratio-combined}.
When $\alpha=0$ (Figure~\ref{fig:Lid-ratio-alpha0}), the approximate bound closely matches the exact one for all $k$, especially for larger imbalance ratio.
In contrast, even a small amount of mutuality ($\alpha=0.1$; Figure~\ref{fig:Lid-ratio-alpha01}) introduces a noticeable deviation of the lower bound.

\begin{figure}[t!]
  \centering
	\vspace{-15pt}
  \begin{subfigure}{0.48\textwidth}
    \centering
    \includegraphics[width=\linewidth]{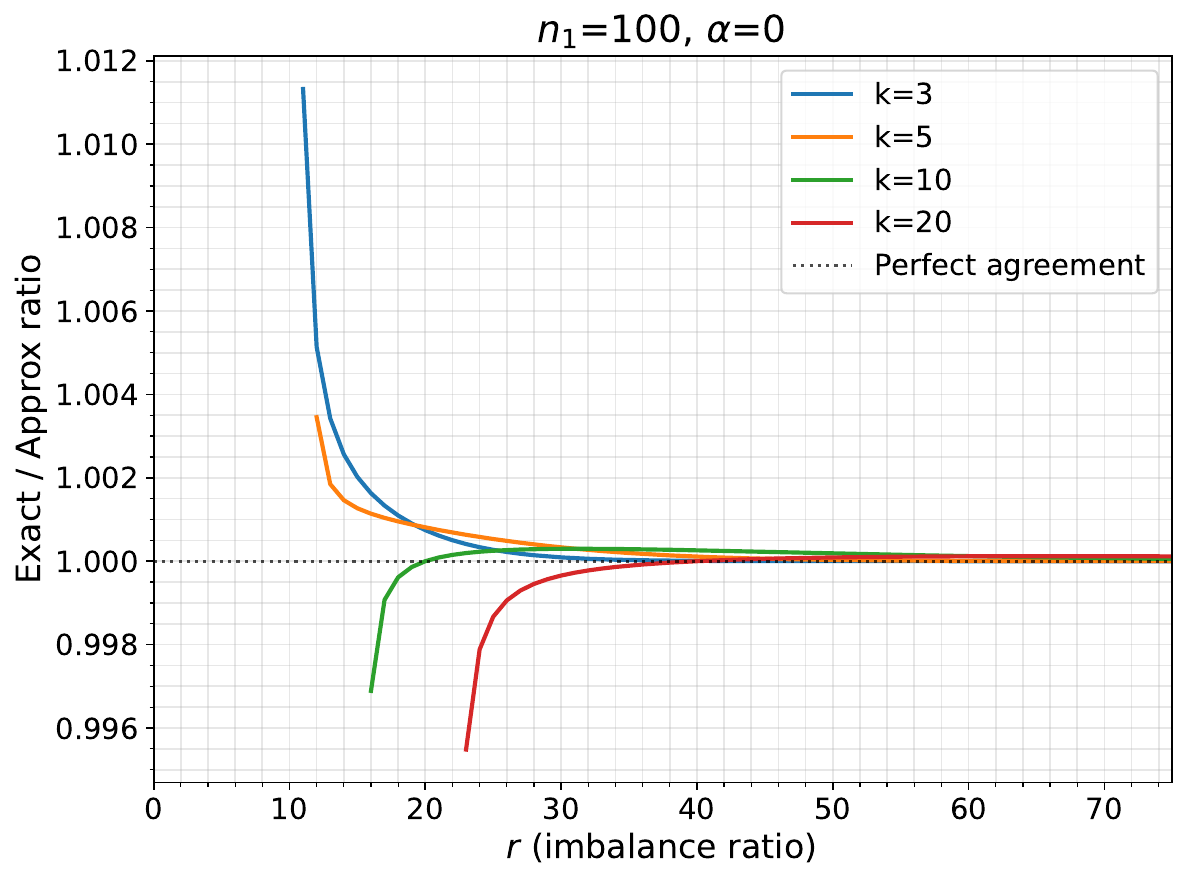}
    \caption{$\alpha=0$}
    \label{fig:Lid-ratio-alpha0}
  \end{subfigure}\hfill
  \begin{subfigure}{0.48\textwidth}
    \centering
    \includegraphics[width=\linewidth]{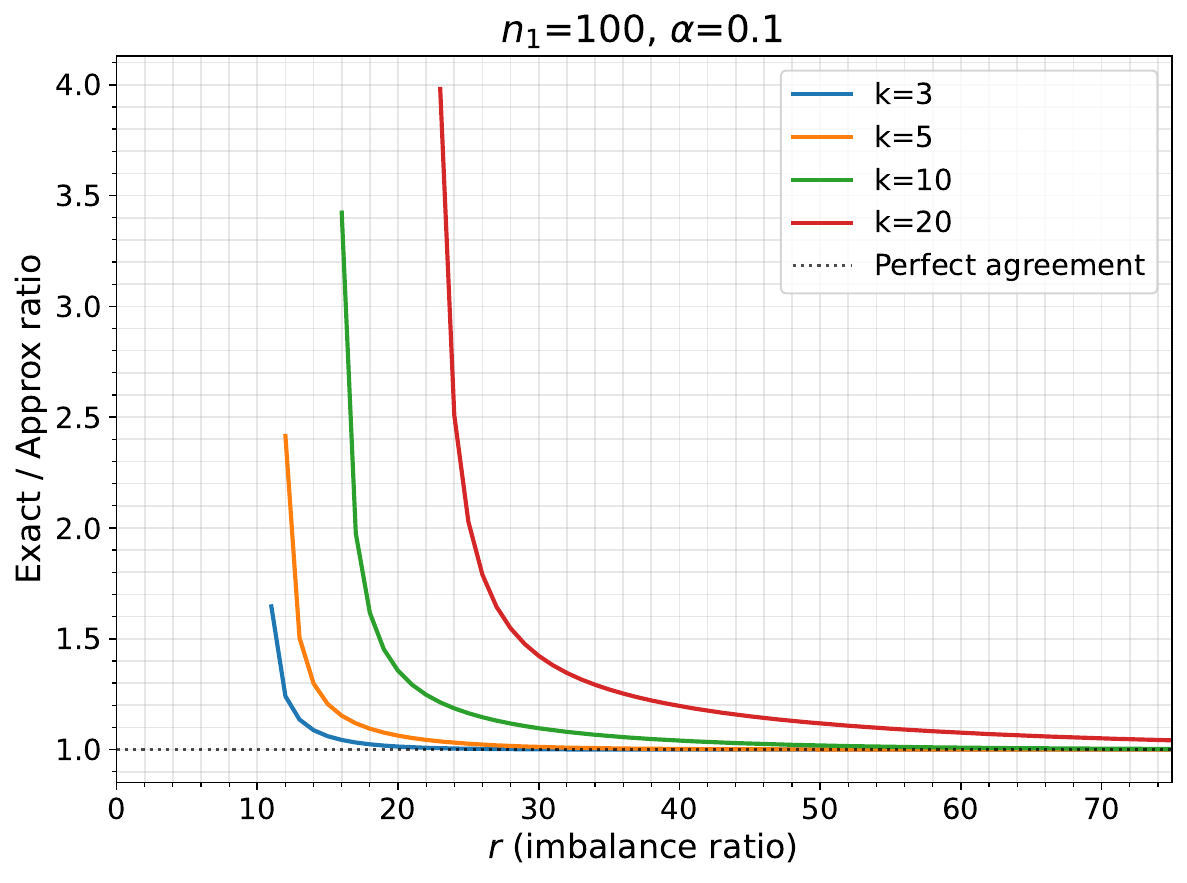}
    \caption{$\alpha=0.1$}
    \label{fig:Lid-ratio-alpha01}
  \end{subfigure}
  \vspace{-8pt}
  \caption{Exact/approximate bound ratios for two levels of mutuality.}
  \label{fig:Lid-ratio-combined}
	\vspace{-10pt}
\end{figure}

\begin{figure}[t!]
  \centering
  \begin{subfigure}{0.48\textwidth}
    \centering
    \includegraphics[width=\linewidth]{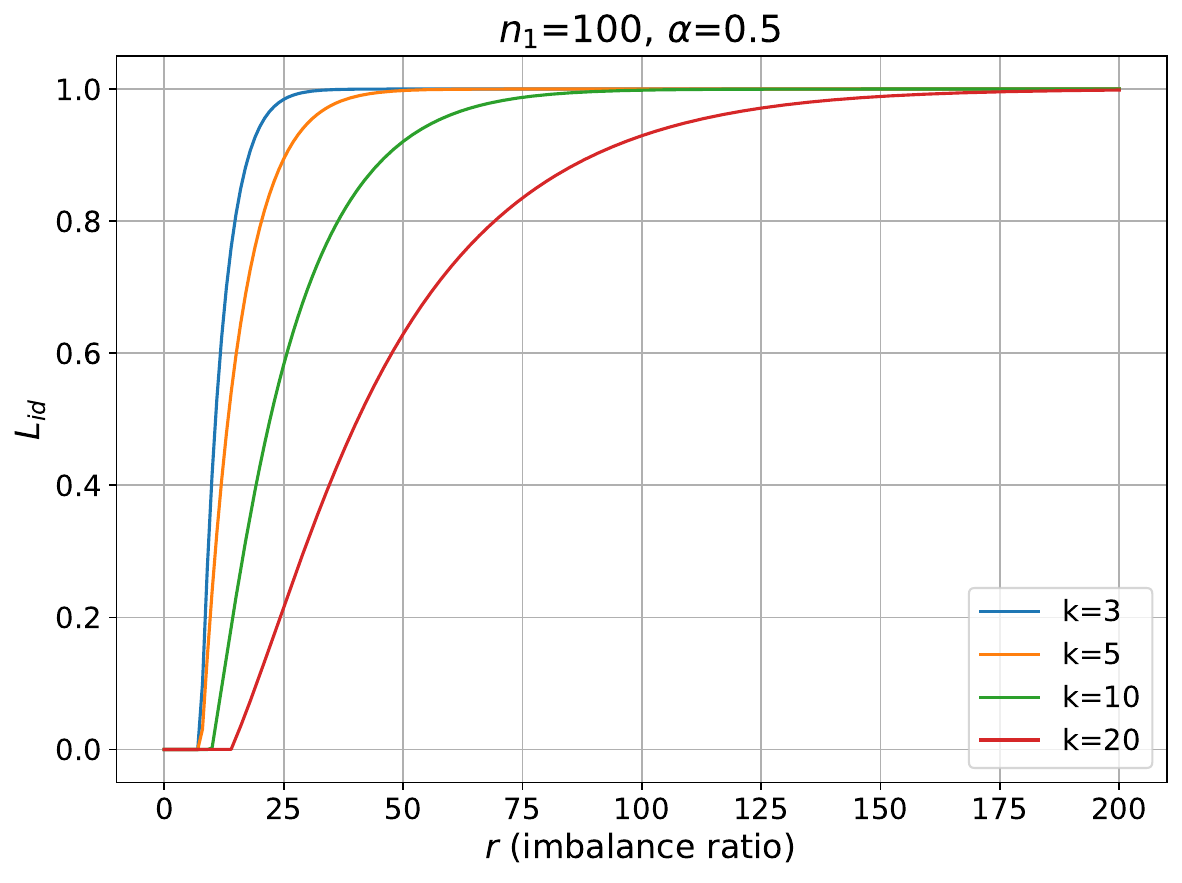}
    \caption{Lower bound $L_{id}$ vs. $r$}
    \label{fig:Lid-vs-ratio}
  \end{subfigure}\hfill
  \begin{subfigure}{0.48\textwidth}
    \centering
    \includegraphics[width=\linewidth]{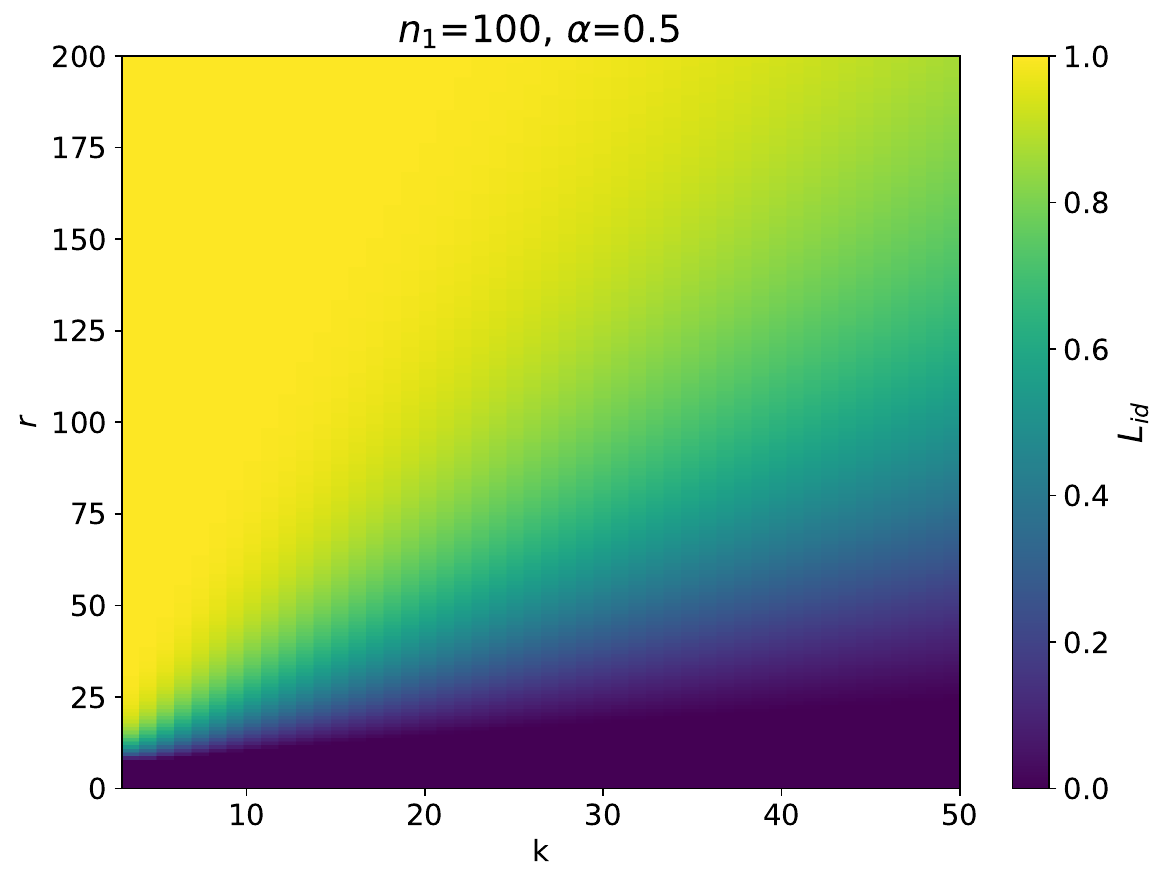}
    \caption{Heatmap of $L_{id}$ over $(r,k)$}
    \label{fig:Lid-heatmap}
  \end{subfigure}
  \vspace{-8pt}
  \caption{Comparison of two visualizations of the exact bound.}
  \label{fig:Lid-combined}
	\vspace{-10pt}
\end{figure}

\begin{figure}[t!]
  \centering
  \includegraphics[width=0.5\textwidth]{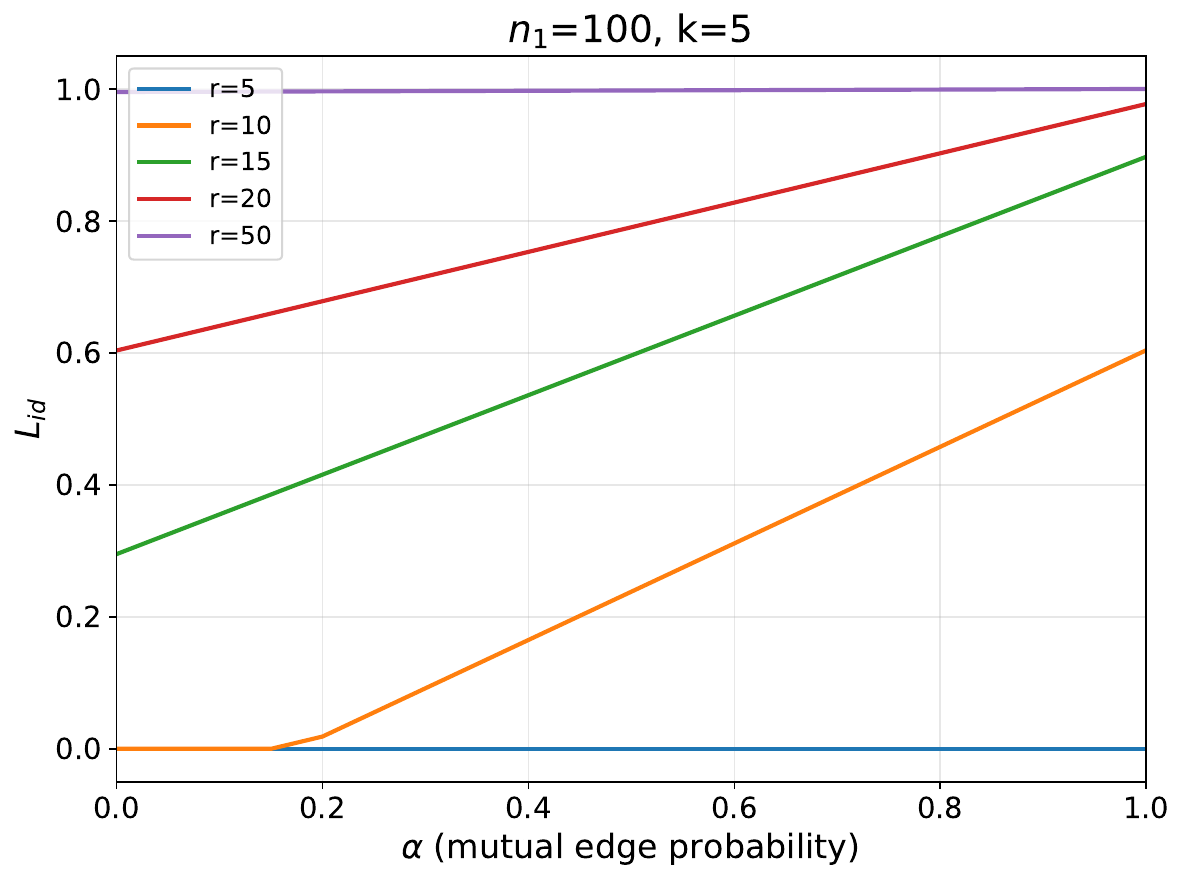}
  \vspace{-8pt}
  \caption{Exact bound as a function of $\alpha$ for different imbalance ratios $r$.}
  \label{fig:Lid-alpha}
	\vspace{-10pt}
\end{figure}

Next, in Figure~\ref{fig:Lid-combined}, we focus on the exact bound $L_{id}$ under varying imbalance ratios $r$ and neighborhood sizes $k$ (with $n_1=100$ and $\alpha=0.5$ fixed).
Figure~\ref{fig:Lid-vs-ratio} shows that $L_{id}$ steadily increases as the oversampling ratio $r$ rises.
For small $k$, even a moderate oversampling ratio results in significant identifiability.
However, for larger $k$, a higher oversampling ratio is required.
The heatmap in Figure~\ref{fig:Lid-heatmap} clearly illustrates this interaction.
In the upper-left area, where $k$ is small and $r$ is large, $L_{id}$ quickly approaches 1.
This indicates almost perfect identifiability.
In contrast, in the lower-right area, where $k$ is large and $r$ is small, $L_{id}$ is close to zero.
This suggests that the reconstructed edges are not dense enough to reach high identifiability.
Overall, these plots confirm the trade-off: identifiability improves with oversampling, but its efficiency depends strongly on the neighborhood parameter $k$.

Finally, Figure~\ref{fig:Lid-alpha} presents the exact bound $L_{id}$ as a function of $\alpha$ for several imbalance ratios $r$.
The curves illustrate the sensitivity of $L_{id}$ to the graph structure.
For example, when $r=10$, small increases in $\alpha$ would lead to substantial changes in $L_{id}$, highlighting how mutuality in the KNN graph strongly influences privacy leakage.

\section{\distin{} and \recon{} with Relaxed Assumptions}
\label{app:extra}

In this section, we present results for our attacks under relaxed assumptions -- on high-dimensional data (Table~\ref{tab:high}), mixed-type data (Table~\ref{tab:mix}), SMOTE with $k=2$ (Table~\ref{tab:k2}), and perturbed data (Table~\ref{tab:noise_aug} and~\ref{tab:noise_syn}).
For the first two experiments, we use datasets different from the eight main datasets in Table~\ref{tab:data}.
Namely, we use higgs and miniboone from OpenML~\citep{vanschoren2014openml} as high-dimensional data, and cardio and churn from Kaggle as mixed-type data.
These datasets are used in relevant prior work~\citep{kotelnikov2023tabddpm}.
For all datasets, the minority class is undersampled so the imbalance is 25.
The results of the first three experiments are discussed in Section~\ref{subsec:extra}.

Next, we discuss the results of the forth experiment -- running our attacks on perturbed SMOTE data.

\begin{table}[t!]
  \footnotesize
  \centering
	\begin{tabular}{lrrr|rr|rr}
		\toprule
		\textbf{Dataset}		& \boldmath{$r$}	& \boldmath{$n$}	& \boldmath{$d$}	& \multicolumn{2}{c}{\distin{}}						& \multicolumn{2}{c}{\recon{}} 		 			\\
												&									&									&									& (Precision)   		& (Recall) 						& (Precision)   		& (Recall)          \\
		\midrule
		higgs             	& 25             	&	47,976					&	28							& 1.00 $\pm$ 0.00   & 1.00 $\pm$ 0.00     & 1.00 $\pm$ 0.00   & 1.00 $\pm$ 0.00  	\\
		miniboone           & 25             	&	96,690					&	50							& 1.00 $\pm$ 0.00   & 1.00 $\pm$ 0.00     & 1.00 $\pm$ 0.00   & 1.00 $\pm$ 0.00  	\\
	  \bottomrule
	\end{tabular}
  \vspace{-7pt}
	\caption{Privacy attacks vs. augmented/synthetic data with high-dimensional data.}
	\label{tab:high}
	\vspace{-10pt}
\end{table}

\begin{table}[t!]
	\footnotesize
  \centering
	\begin{tabular}{lrrr|rr|rr}
		\toprule
		\textbf{Dataset}		& \boldmath{$r$} 	& \boldmath{$n$}	& \boldmath{$d$}	& \multicolumn{2}{c}{\distin{}}						& \multicolumn{2}{c}{\recon{}} 		 		  \\
												&									&									&	(num, cat)			& (Precision)   		& (Recall) 						& (Precision)   		& (Recall)          \\
		\midrule
		cardio             	& 25             	&	7,256						&	(5, 6)					& 1.00 $\pm$ 0.00   & 1.00 $\pm$ 0.00     & 1.00 $\pm$ 0.00   & 0.98 $\pm$ 0.00   \\
		churn             	& 25             	&	8,269						&	(5, 5)					& 1.00 $\pm$ 0.00   & 1.00 $\pm$ 0.00     & 1.00 $\pm$ 0.00   & 0.98 $\pm$ 0.01   \\
		\bottomrule
		\end{tabular}
		\vspace{-7pt}
		\caption{Privacy attacks vs. augmented/synthetic data with mixed-type data.}
		\label{tab:mix}
		\vspace{-10pt}
\end{table}

\begin{table}[t!]
  \footnotesize
  \centering
	\begin{tabular}{lr|rr|rr}
		\toprule
		\textbf{Dataset}		& \boldmath{$r$} 	& \multicolumn{2}{c}{\distin{}}						& \multicolumn{2}{c}{\recon{}} 		 		  \\
												&									& (Precision)   		& (Recall) 						& (Precision)   		& (Recall)    		 	\\
    \midrule
		ecoli             	& 8.6             & 1.00 $\pm$ 0.00   & 1.00 $\pm$ 0.00     & 1.00 $\pm$ 0.00   & 0.43 $\pm$ 0.06   \\
		abalone             & 9.7             & 1.00 $\pm$ 0.00   & 1.00 $\pm$ 0.00     & 1.00 $\pm$ 0.00   & 0.42 $\pm$ 0.02   \\
		car\_eval\_34       & 12             	& 1.00 $\pm$ 0.00   & 1.00 $\pm$ 0.00     & 1.00 $\pm$ 0.00   & 0.61 $\pm$ 0.02   \\
		solar\_flare\_m0    & 19             	& 1.00 $\pm$ 0.00   & 1.00 $\pm$ 0.00     & 1.00 $\pm$ 0.00   & 0.40 $\pm$ 0.02   \\
		car\_eval\_4        & 26             	& 1.00 $\pm$ 0.00   & 1.00 $\pm$ 0.00     & 1.00 $\pm$ 0.00   & 0.60 $\pm$ 0.00   \\
		yeast\_me2          & 28             	& 1.00 $\pm$ 0.00   & 1.00 $\pm$ 0.00     & 1.00 $\pm$ 0.00   & 0.58 $\pm$ 0.02   \\
		mammography         & 42             	& 1.00 $\pm$ 0.00   & 1.00 $\pm$ 0.00     & 1.00 $\pm$ 0.00   & 0.61 $\pm$ 0.01   \\
		abalone\_19         & 130             & 1.00 $\pm$ 0.00   & 1.00 $\pm$ 0.00     & 1.00 $\pm$ 0.00   & 0.49 $\pm$ 0.01   \\
	  \midrule
		average             &                 & 1.00 $\pm$ 0.00   & 1.00 $\pm$ 0.00     & 1.00 $\pm$ 0.00   & 0.52 $\pm$ 0.02   \\
		\bottomrule
	\end{tabular}
  \vspace{-7pt}
	\caption{Privacy attacks vs. augmented/synthetic data by SMOTE with $k=2$.}
	\label{tab:k2}
	\vspace{-10pt}
\end{table}

\descr{SMOTE with Perturbed Linear Interpolation.}
We test the robustness of our attacks on perturbed data by adding column-wise noise in the range \{$10^{-10}, 10^{-7}, 10^{-5}, 10^{-3}$\}, ensuring that no synthetic record lies exactly on the line between its generating real records (see Table~\ref{tab:noise_aug} and~\ref{tab:noise_syn}).
Similarly, we use tolerance levels in the same range for detecting lines/intersections within our attacks.
We use the yeast\_me2 dataset from our main experiments (Table~\ref{tab:data}).

Perhaps surprisingly, both \distin{} and \recon remain highly effective when the added noise is small ($\leq 10^{-7}$), achieving near-perfect performance.
This shows that our attacks can generalize to SMOTE variants that use non-strictly linear interpolation.
However, the performance of both attacks drops sharply when larger noise is injected -- though such noise levels would likely degrade downstream utility as well.
Across all noise settings, we observe a consistent trend: for each noise level, both attacks achieve their best precision (the more important metric, as already discussed) when the tolerance parameter matches the injected noise level.

\begin{table}[t!]
  \centering
  \footnotesize
  \setlength{\tabcolsep}{4pt}
  \begin{tabular}{l|rr|rr|rr|rr}
    \toprule
    \distin{}															& \multicolumn{8}{c}{\textbf{SMOTE augmented data w/ noise per column (Prec./Rec.)}}												\\
																					&	\multicolumn{2}{c}{noise $ = 10^{-10}$}		& \multicolumn{2}{c}{noise $ = 10^{-7}$}		& \multicolumn{2}{c}{noise $ = 10^{-5}$}		& \multicolumn{2}{c}{noise $ = 10^{-3}$}    	\\
    \midrule
		tol. $ = 10^{-10}$										& \textbf{0.96}			& 1.00				& 0.88			& 1.00					& 0.04			& 1.00				& 0.04		 & 1.00		\\
		tol. $ = 10^{-7}$											& 0.91			& 0.99				& \textbf{0.91}			& 0.99					& 0.07			& 1.00				& 0.04		 & 1.00		\\
		tol. $ = 10^{-5}$											& 0.41			& 0.80				& 0.37			& 0.78					& \textbf{0.07}			& 0.93				& 0.04		 & 0.92		\\
		tol. $ = 10^{-3}$											& 0.09			& 0.31				& 0.09			& 0.31					& 0.06			& 0.82				&	\textbf{0.06}		 & 0.80		\\
    \bottomrule
  \end{tabular}
  \vspace{-7pt}
  \caption{\distin{} vs. perturbed augmented data, on yeast\_me2.}
  \label{tab:noise_aug}
  \vspace{-0.24cm}
\end{table}

\begin{table}[t!]
  \centering
  \footnotesize
  \setlength{\tabcolsep}{4pt}
  \begin{tabular}{l|rr|rr|rr|rr}
    \toprule
		\recon{}															& \multicolumn{8}{c}{\textbf{SMOTE synthetic data w/ noise per column (Prec./Rec.)}}												\\
																					&	\multicolumn{2}{c}{noise $ = 10^{-10}$}		& \multicolumn{2}{c}{noise $ = 10^{-7}$}		& \multicolumn{2}{c}{noise $ = 10^{-5}$}		& \multicolumn{2}{c}{noise $ = 10^{-3}$}    	\\
    \midrule
		tol. $ = 10^{-10}$										& \textbf{1.00}			& 1.00				& 0.86			& 0.96					& 0.00			& 0.00				& 0.00		 & 0.00		\\
		tol. $ = 10^{-7}$											& 1.00			& 1.00				& \textbf{0.96}			& 1.00					& 0.00			& 0.00				& 0.00		 & 0.00		\\
		tol. $ = 10^{-5}$											& 0.99			& 0.97				& 0.85			& 0.84					& \textbf{0.71}			& 0.10				& 0.00		 & 0.00		\\
		tol. $ = 10^{-3}$											& 0.63			& 0.61				& 0.55			& 0.53					& 0.10			& 0.09				&	0.00		 & 0.00						\\
    \bottomrule
  \end{tabular}
  \vspace{-7pt}
  \caption{\recon{} vs. perturbed synthetic data, on yeast\_me2.}
  \label{tab:noise_syn}
  \vspace{-0.24cm}
\end{table}

Finally, we relax another assumption -- running our attacks without prior knowledge of $k$ and $r$.

\descr{SMOTE with Unknown $k$ and $r$.}
We describe heuristics for running \distin{} and \recon{} without knowing SMOTE's parameters $k$ and $r$.
In practice, precise estimates are unnecessary: the neighborhood search only needs to be wide enough to include the real records that generated a given (synthetic) record.
Overestimating $k$ or $r$ does not reduce precision/recall, it only increases runtime.
Thus, a simple strategy would be to skip parameter estimation altogether and simply use a large neighborhood search (e.g., 10-25\% of the dataset).

To estimate $k$ or $r$ more accurately, the adversary can reuse the same sub-procedures employed in the attacks.
For a given record, a large neighborhood search (e.g., 10-25\% of the dataset) identifies all neighbors on the same line.
For augmented data, the adversary can locate an endpoint (a real record), run a second search around it, and infer: i) the number of lines pointing to this record (an overestimate of $k$), and ii) the number of records per line (a rough approximation of $r/k$).
For synthetic data, the adversary can instead detect three intersecting lines around the record, identify their intersection (a real record), and proceed as above.
Repeating this procedure and averaging yields stable estimates.

\end{document}